\documentclass[a4paper,12pt]{article}
\usepackage{amsthm,amsmath,amssymb}
\usepackage{enumitem}

\usepackage{url}
\usepackage{tikz}
\theoremstyle{plain}
\newtheorem{lemma}{Lemma}[section]
\newtheorem{theorem}[lemma]{Theorem}
\newtheorem{corollary}[lemma]{Corollary}
\newtheorem{proposition}[lemma]{Proposition}
\theoremstyle{definition}
\newtheorem{definition}[lemma]{Definition}
\newtheorem{remark}[lemma]{Remark}
\newtheorem{example}[lemma]{Example}
\newcommand{\F}{\mathbb F}
\newcommand{\wt}{\operatorname{wt}}
\newcommand{\Enc}{\operatorname{Enc}}
\newcommand{\Imf}{\operatorname{Im}}
\newcommand{\rank}{\operatorname{rank}}
\newcommand{\pr}{\operatorname{pr}}
\newcommand{\Nlin}{N_q^{\mathrm{lin}}}

\begin{document}

\title{On systematicity of linear function-correcting codes}
\author{Duy Ho \\
Department of Mathematical Sciences, \\
UAE University, PO Box 15551, Al Ain, UAE\\ 
Email: duyho@uaeu.ac.ae }
\date{}
\maketitle

\begin{abstract}
The standard formulation of function-correcting codes uses systematic encodings. 
We study the redundancy cost of this constraint
when both the prescribed function and the encoding are linear.
We introduce the function-separation distance and show that several classical unequal error protection parameters are special cases.
For linear functions and encodings, this distance is the first relative generalized Hamming weight of the code relative to the encoded kernel. 
We formulate free and systematic linear separation problems. 
We prove that the optimal free redundancy depends only on the rank of the function, and determine the optimal systematic redundancy for prescribed separations $d\leq3$. 
\end{abstract}

\textbf{Keywords}: 
Function-correcting codes, systematic encodings,
function-separation distance, unequal error protection, optimal redundancy, systematicity gap.

\textbf{Mathematics Subject Classification}: 94B05, 94B60, 94B65

\section{Introduction}
\label{sec:intro}
In 2023, Lenz, Bitar, Wachter-Zeh, and Yaakobi introduced
\emph{function-correcting codes} (FCCs), a coding framework designed
to protect the value of a prescribed function of a transmitted message against errors \cite{lenz2023}. 
In contrast to classical error-correcting codes, which impose a minimum-distance requirement between every pair of distinct codewords in order to recover the entire message,  FCCs impose such a requirement only on certain pairs of codewords. 
This relaxation is well motivated since there are scenarios  where the receiver is interested only in specific parts of the
transmitted message, rather than in recovering the entire message.

Since then, the theory of FCCs has developed along three main
directions. 
First, the framework has been extended to different
metrics and channel models, including the symbol-pair and
$b$-symbol metrics, homogeneous and Lee metrics over finite rings,
and the Rosenbloom-Tsfasman metric for parallel channels
\cite{hareesh2026,liu2026homogeneous,liu2026rt,
singh2025bsymbol,verma2026bsymbol,verma2025lee,
verma2026chainring,xia2024}.
Second, considerable attention has been devoted to
optimal redundancy and to particular classes of functions. Zhang
et al. improved bounds and constructions for the Hamming-weight and
Hamming-weight-distribution functions \cite{ge2025}, Ly and
Soljanin studied redundancy over general finite fields
\cite{ly2025}, and Premlal and Rajan obtained new bounds and investigated FCCs for linear functions \cite{premlal2025}. 
Third, several structural extensions of the FCC framework have been introduced. 
FCCs with data protection jointly protect the message
and a prescribed function and also include an investigation of
linear FCCs \cite{rajput2026}.
Function-correcting partition codes
formulate the problem directly in terms of partitions of the message space and allow several functions to be protected simultaneously \cite{rajputpartition2026}, while generalized function-correcting partition codes allow different partitions to receive different levels of protection
\cite{rajputgeneralized2026}.

The standard FCC formulation introduced in \cite{lenz2023} uses systematic encodings. This means each codeword has the form $(u,p(u))$, so that the original message coordinates are retained
and redundancy is appended to them. 
Although this requirement is
natural in applications, it imposes an additional structural constraint on the encoding and may increase the required redundancy.
In this paper, we study the cost of this constraint in
the linear setting. 
For a prescribed linear function $f$ and a
prescribed separation $d$, we compare arbitrary linear encodings with
linear encodings that are systematic relative to a distinguished basis of the
message space. We call the difference between the corresponding
optimal redundancies the \emph{systematicity gap}, and investigate
when this gap vanishes and how large it can be.

Let $V$ denote the message space. To compare these settings within a
common framework, for a function $f\colon V\to\Imf(f)$ and an encoding
$\Enc$, we consider the \emph{function-separation distance}
$\delta_{\Enc}(f)$, defined as the minimum Hamming distance between
encoded messages having distinct function values.
Given a prescribed separation $d$, the basic
requirement is therefore
\[
\delta_{\Enc}(f)\geq d.
\]
Depending on the linearity and systematicity conditions imposed on $f$ and $\Enc$, the general separation problem gives rise to the specializations represented in Figure~\ref{fig:problem-hierarchy}.
These problems are stated explicitly below.

\begin{enumerate}

\item[$\mathrm{I}$]
\emph{The separation problem.}
Given a function $f\colon V\to\Imf(f)$ and a prescribed separation
$d$, determine the least redundancy of an encoding $\Enc$ satisfying
$ \delta_{\Enc}(f)\geq d. $ 

\item[$\mathrm{II}$]
\emph{The systematic separation problem.}
Given a function $f\colon V\to\Imf(f)$ and a prescribed separation
$d$, determine the least redundancy of an encoding $\Enc$ that is
systematic relative to a distinguished basis $\mathcal E$ and satisfies  $\delta_{\Enc}(f)\geq d.$

For $d=2t+1$, this is the usual FCC problem, whose optimal redundancy is denoted by $r_f(k,t)$ in the literature.

\item[$\mathrm{III}$]
\emph{The free linear separation problem.}
Given a   linear function
$f\colon V\to\F_q^\ell$ and a prescribed separation $d$, determine
the least redundancy of a linear encoding $\Enc$ satisfying $
\delta_{\Enc}(f)\geq d. $

We denote the corresponding optimal
redundancy by $r_f^{\mathrm{free,lin}}(d)$.

\item[$\mathrm{II}\cap\mathrm{III}$]
\emph{The systematic linear separation problem.}
Given a   linear function
$f\colon V\to\F_q^\ell$ and a prescribed separation $d$, determine
the least redundancy of a linear encoding $\Enc$ that is systematic
relative to $\mathcal E$ and satisfies
$ \delta_{\Enc}(f)\geq d. $

We denote the corresponding optimal
redundancy by $r_{f,\mathcal E}^{\mathrm{sys,lin}}(d)$.

\end{enumerate}
Here, Problem~$\mathrm{II}\cap\mathrm{III}$ denotes the common
specialization of Problems~$\mathrm{II}$ and~$\mathrm{III}$.
Equivalently, it is the linear separation problem with the
additional requirement that the encoding be systematic.

\begin{figure}[!h]
\centering
\begin{tikzpicture}[scale=0.9,
  every node/.style={font=\small}]
\draw[rounded corners, thick] (-4.5,-3.25) rectangle (4.5,3.45);
\node[anchor=north west, font=\large\bfseries] at (-4.30,3.25)
  {$\mathrm{I}$};

\begin{scope}[fill opacity=0.12]
\fill[blue] (-1.1,0.4) circle (2.0);              
\fill[red] (1.1,0.4) circle (2.0);               
\fill[green!60!black] (0,-1.1) circle (2.0);     
\end{scope}

\begin{scope}
\clip (-1.1,0.4) circle (2.0);
\fill[cyan!60, opacity=0.18] (0,-1.1) circle (2.0);
\end{scope}

\draw (-1.1,0.4) circle (2.0);
\draw (1.1,0.4) circle (2.0);
\draw (0,-1.1) circle (2.0);

\node[font=\footnotesize] at (-1.90,2.75) {$f$ linear};
\node[font=\footnotesize] at (1.95,2.75) {$\Enc$ systematic};
\node[font=\footnotesize] at (2.85,-2.50) {$\Enc$ linear};

\node[font=\large\bfseries] at (2.05,0.72) {$\mathrm{II}$};
\node[font=\large\bfseries] at (-1.25,-1.08) {$\mathrm{III}$};
\node[font=\footnotesize\bfseries] at (0,-0.10)
  {$\mathrm{II}\cap\mathrm{III}$};
\end{tikzpicture}
\caption{Specializations of the separation problem under linearity and systematicity constraints.}
\label{fig:problem-hierarchy}
\end{figure}

With the terminology above, in the present paper we address Problem~$\mathrm{III}$ and its systematic subclass Problem~$\mathrm{II}\cap\mathrm{III}$. 
We determine the optimal redundancy in Problem~$\mathrm{III}$
exactly for every prescribed separation $d$.
 For Problem~$\mathrm{II}\cap\mathrm{III}$, we establish general structural results and determine the optimal systematic linear redundancy for all prescribed separations $d\leq3$. At separation
$d=3$, our results characterize exactly when the systematicity gap
is zero and show that it can be arbitrarily large.

The relation between FCCs and classical unequal error protection (UEP) codes was noted in \cite[Sec.~I-B]{lenz2023}, see also
\cite{liu2026homogeneous, rajput2026}.
The function-separation distance makes this relation precise. 
We show that the standard UEP parameters are function-separation
distances of the corresponding coordinate and projection maps and, for
linear functions and linear encodings, identify this distance with a
relative generalized Hamming weight.

The content of the paper is organized as follows. 
In Section~\ref{sec:prelim}, we recall the basic definitions of
encodings, function-correcting codes, function-separation distance,
and the UEP parameters. 
In Section~\ref{sec:fcc-uep}, we establish the correspondence between function-separation distance and classical UEP parameters, and derive the relative-distance and separation-vector representations for linear encodings and linear functions. 
In Section~\ref{sec:systematicity}, we formulate
the free and systematic linear separation problems, determine the
optimal free linear redundancy, and study the effect of systematicity.
In Section~\ref{sec:distance-three}, we determine the optimal
systematic linear redundancy for prescribed separations $d\leq3$.

\section{Preliminaries}\label{sec:prelim}

Let $\F_q$ be a finite field of $q$ elements. 
For $x=(x_1,\dots,x_n)\in\F_q^n$, the \emph{Hamming weight} $\wt(x)$ is the number of nonzero coordinates $x_i$, and the \emph{Hamming distance} between $x,y\in\F_q^n$ is $d_H(x,y):=\wt(x-y)$.
All distances used in this paper are Hamming distances.

The message space is a $k$-dimensional $\F_q$-vector space $V$.
We fix a distinguished basis
\[
\mathcal E=(e_1,\dots,e_k)
\]
with associated coordinate map
$
[\,\cdot\,]_{\mathcal E}\colon V \rightarrow \F_q^k,
$
so that
$v=\sum_{i=1}^k\bigl([v]_{\mathcal E}\bigr)_i\,e_i$ for every $v\in V$.
We view each message $v\in V$ as the $k$-tuple
$x=[v]_{\mathcal E}\in\F_q^k$.

For a linear function $f\colon V\to W$, where $W$ is an
$\F_q$-vector space, we define its rank by
\[
\rank(f):=\dim_{\F_q}\Imf(f)
         =k-\dim_{\F_q}\ker f.
\]
In particular, if $f\colon V\to\F_q^\ell$ is surjective, then
$\rank(f)=\ell$.

Throughout the paper, $t$ and $d$ denote integers with $t\geq0$ and $d\geq1$. We will adopt the convention $\min\emptyset=+\infty$.

\subsection{Encodings, FCCs, and function-separation distance}\label{sec:common}

In many standard textbooks (for example \cite{ding2018,huffman2003,lingxing2004,macwilliams1977}), an encoding is introduced through its generator matrix, and thereby   linearity is assumed as part of the definition.
Because the redundancy comparisons in this paper also
require nonlinear encodings, we adopt the more general convention of Dunning and Robbins~\cite{dunning1978}, where an encoding is an arbitrary bijection from the message space $V$ onto its code and linearity is a special case.

For a basis $\mathcal B=(b_1,\dots,b_k)$ of $V$, let
\[
[\,\cdot\,]_{\mathcal B}\colon V\longrightarrow\F_q^k
\]
be the associated coordinate map, so that
$v=\sum_{i=1}^k\bigl([v]_{\mathcal B}\bigr)_i\,b_i$ for every $v\in V$.

\begin{definition} \label{def:encodings}
An \emph{encoding of length $n$} is a bijective map
$\Enc\colon V\to C$ onto a subset $C\subseteq\F_q^n$, called its
\emph{$(n,q^k)_q$ code}. Bijectivity forces $|C|=q^k$, hence $q^k\le q^n$ and $n\ge k$, and the integer
\[
r:=n-k\ge0
\]
is called the \emph{redundancy} of $\Enc$.

An encoding is a \emph{linear encoding} if it is linear as a map from $V$ into the ambient vector space $\F_q^n$.
In this case, the code $C$ is  a linear code of dimension $k$ and we call $C$ an $[n,k]_q$ code. 
The \emph{generator matrix} of $\Enc$ relative to $\mathcal B$ is the
$k\times n$ matrix
\[
G_{\mathcal B}(\Enc)
:=
\begin{pmatrix}
\Enc(b_1)\\
\vdots\\
\Enc(b_k)
\end{pmatrix},
\]
so that
$\Enc(v)=[v]_{\mathcal B}\,G_{\mathcal B}(\Enc)$
for $v \in V$. This is the one-to-one correspondence  between linear encodings and   generator matrices of a linear code in~\cite{dunning1978}, once the basis $\mathcal B$ is chosen.

An encoding $\Enc\colon V\to C\subseteq\F_q^{k+r}$ is \emph{systematic relative
to (the distinguished basis) $\mathcal E$} if 
\[
\Enc(v)=\bigl([v]_{\mathcal E},p(v)\bigr)
\]
for some map $p\colon V\to\F_q^r$. 
If $\Enc$ is linear, this is equivalent to
$G_{\mathcal E}(\Enc)=[\,I_k\mid P\,]$ for some
$P\in\F_q^{k\times r}$.
Since $\mathcal E$ is fixed, we
usually suppress the phrase ``relative to $\mathcal E$.''
\end{definition}

\begin{definition}
\label{def:fcc}
Let $f\colon V\to\Imf(f)$ be a function. A systematic encoding
$\Enc\colon V\to C\subseteq\F_q^{k+r}$ relative to $\mathcal E$ is an
\emph{$(f,t)$-function-correcting
code} if
\[
f(u)\neq f(v)
\quad\Longrightarrow\quad
 d_H\bigl(\Enc(u),\Enc(v)\bigr)\geq 2t+1
\]
for all $u,v\in V$.
\end{definition}

\begin{definition}
\label{def:optimal-redundancy}
For $f\colon V\to\Imf(f)$ and $t\geq0$, the \emph{optimal
redundancy} $r_f(k,t)$ is the smallest $r$ such that there exists an $(f,t)$-function-correcting code
$\Enc\colon V\to C\subseteq\F_q^{k+r}$.
\end{definition}

Definition~\ref{def:fcc} was  originally introduced in
\cite{lenz2023} over $\F_2$ and subsequently extended to $\F_q$ in \cite{premlal2025,rajput2026}. 
Because systematicity is defined relative to the fixed basis $\mathcal E$, the same dependence is understood in the notation $r_f(k,t)$.

We now introduce the function-separation distance for a general (not necessarily linear) encoding. This is the main parameter we study in this paper.  
\begin{definition}
\label{def:function-separation}
Let $\Enc\colon V\to C\subseteq\F_q^n$ be an encoding and let
$f\colon V\to\Imf(f)$ be a function. The
\emph{function-separation distance of $\Enc$ with respect to $f$} is
\[
\delta_{\Enc}(f)
:=
\min_{\substack{u,v\in V\\ f(u)\neq f(v)}}
d_H\bigl(\Enc(u),\Enc(v)\bigr).
\]
For $d\geq1$, we say that $\Enc$ \emph{achieves separation $d$ with
respect to $f$} if
\[
\delta_{\Enc}(f)\geq d.
\]
We call $d$ the \emph{prescribed separation}.
\end{definition}

\begin{proposition}
\label{prop:separation-criterion}
Let $\Enc\colon V\to C\subseteq\F_q^n$ be an encoding, $f\colon V\to\Imf(f)$ a
function, and $t\ge 0$. Then the following are equivalent:
\begin{enumerate} 
\item there exists a map $g\colon\F_q^n\to\Imf(f)$ such that
$g(y)=f(v)$ for every $v\in V$ and every $y\in\F_q^n$ with
$d_H\bigl(y,\Enc(v)\bigr)\le t$;
\item $\delta_{\Enc}(f)\ge 2t+1$.
\end{enumerate}
\end{proposition}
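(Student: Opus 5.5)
I will prove the two implications separately, using only the definition of $\delta_{\Enc}(f)$ and the triangle inequality for the Hamming distance.

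For (2)$\Rightarrow$(1), assume $\delta_{\Enc}(f)\ge 2t+1$. Let $B_t(c)$ denote the Hamming ball of radius $t$ around $c\in\F_q^n$. The key observation is that if $y\in B_t(\Enc(u))\cap B_t(\Enc(v))$, then
\[
d_H\bigl(\Enc(u),\Enc(v)\bigr)\le d_H\bigl(\Enc(u),y\bigr)+d_H\bigl(y,\Enc(v)\bigr)\le 2t<2t+1 ,
\]
so by the definition of $\delta_{\Enc}(f)$ we must have $f(u)=f(v)$. Hence the rule
\[
g(y):=f(v)\quad\text{for any } v\in V \text{ with } d_H\bigl(y,\Enc(v)\bigr)\le t
\]
is well defined on $\bigcup_{v\in V}B_t(\Enc(v))$. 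I will extend $g$ to the rest of $\F_q^n$ by sending every remaining $y$ to a fixed element of $\Imf(f)$; this set is nonempty because $V\neq\emptyset$. By construction $g$ satisfies (1).

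For (1)$\Rightarrow$(2), I will argue by contraposition. Suppose there are $u,v$ with $f(u)\neq f(v)$ and $D:=d_H\bigl(\Enc(u),\Enc(v)\bigr)\le 2t$. Let $S$ be the set of coordinates where $\Enc(u)$ and $\Enc(v)$ differ, so $|S|=D$. I will build a word $y$ that agrees with $\Enc(u)$ outside $S$, agrees with $\Enc(v)$ on $\min(t,D)$ chosen coordinates of $S$, and agrees with $\Enc(u)$ on the remaining $D-\min(t,D)\le t$ coordinates of $S$. Then
\[
d_H\bigl(y,\Enc(u)\bigr)=\min(t,D)\le t,\qquad d_H\bigl(y,\Enc(v)\bigr)=D-\min(t,D)\le t .
\]
Condition (1) would then force $f(u)=g(y)=f(v)$, a contradiction. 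So $\delta_{\Enc}(f)\ge 2t+1$.

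This includes the degenerate case: if $f$ is constant, then $\delta_{\Enc}(f)=+\infty$ by the convention $\min\emptyset=+\infty$, and both conditions hold trivially. There is no real obstacle here. The only care needed is the explicit construction of the intermediate word $y$ and making sure $g$ is defined on all of $\F_q^n$, not just on the union of the balls.
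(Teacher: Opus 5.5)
Your proof is correct and follows essentially the paper's route: the paper reduces (1) to disjointness of the radius-$t$ Hamming balls around $\Enc(u)$ and $\Enc(v)$ whenever $f(u)\neq f(v)$, and then appeals to the triangle inequality. Your explicit intermediate word $y$ supplies the converse direction (distance at most $2t$ forces the two balls to meet), which the triangle inequality alone does not give and which the paper leaves implicit.
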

\begin{proof}
For $x\in\F_q^n$ write $B_t(x)=\{y\in\F_q^n:d_H(y,x)\le t\}$. 
A map $g$ as in~(1) exists if and only if 
$B_t(\Enc(u))\cap B_t(\Enc(v))=\emptyset$ whenever $f(u) \ne f(v)$. 
By the triangle inequality, this is equivalent to
$d_H\bigl(\Enc(u),\Enc(v)\bigr)\ge 2t+1$ whenever $f(u) \ne f(v)$, which is $\delta_{\Enc}(f)\ge 2t+1$.
\end{proof}
Thus achieving separation $2t+1$ with respect to $f$ is equivalent
to recovering the function value in the presence of at most $t$
errors.

 An important feature of the function-separation distance is that it depends on $f$ only through the pairs of messages it separates. This follows directly from the definition of $\delta$, and we record this feature in the following. 
\begin{lemma}\label{prop:fiber-invariance}
Let $f\colon V\to A$ and $g\colon V\to B$ be functions with
\[
f(u)=f(v)\quad\Longleftrightarrow\quad g(u)=g(v)
\qquad\text{for all }u,v\in V.
\]
Then $\delta_{\Enc}(f)=\delta_{\Enc}(g)$ for every encoding
$\Enc$.
\end{lemma}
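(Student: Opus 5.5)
The plan is to argue directly from Definition~\ref{def:function-separation}: the minimum defining $\delta_{\Enc}(f)$ is taken over the index set
\[
S_f:=\{(u,v)\in V\times V : f(u)\neq f(v)\},
\]
and the quantity being minimized, $d_H\bigl(\Enc(u),\Enc(v)\bigr)$, involves only $\Enc$ and not $f$. So $\delta_{\Enc}(f)$ is entirely determined by $\Enc$ together with the set $S_f$. The first step is to rewrite the hypothesis in contrapositive form: for all $u,v\in V$,
\[
f(u)\neq f(v)\iff g(u)\neq g(v),
\]
which is exactly the statement $S_f=S_g$.

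Next, fix an arbitrary encoding $\Enc\colon V\to C\subseteq\F_q^n$. With $S_f=S_g$ in hand we get
\[
\delta_{\Enc}(f)=\min_{(u,v)\in S_f} d_H\bigl(\Enc(u),\Enc(v)\bigr)
=\min_{(u,v)\in S_g} d_H\bigl(\Enc(u),\Enc(v)\bigr)=\delta_{\Enc}(g),
\]
since the two minima are taken over the same index set of the same function.

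One degenerate case needs a separate check, namely when $S_f=\emptyset$ (that is, $f$ is constant). Then $S_g=\emptyset$ as well, so $g$ is also constant, and by the convention $\min\emptyset=+\infty$ both distances equal $+\infty$. The equality therefore holds in every case.

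There is no real obstacle in this argument. The only points requiring care are the passage from the stated biconditional to its negated form and the empty-minimum case.
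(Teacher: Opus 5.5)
Your proposal is correct and is the same argument the paper uses: the paper notes that the lemma "follows directly from the definition of $\delta$", because the minimum in Definition~\ref{def:function-separation} depends on $f$ only through the set of pairs $(u,v)$ with $f(u)\neq f(v)$, and that set equals the corresponding set for $g$. Your separate check of the constant case via the convention $\min\emptyset=+\infty$ is a harmless extra that the paper leaves implicit.
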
 

As a consequence,  when $f$ is linear, the preimages $f^{-1}(y), y \in \Imf(f),$  are the cosets of $\ker f$, so the function-separation distance $\delta$ is determined by the kernel alone.
\begin{corollary}\label{cor:kernel-invariance}
Let $f\colon V\to A$ and $g\colon V\to B$ be linear functions with
$\ker f=\ker g$. Then $\delta_{\Enc}(f)=\delta_{\Enc}(g)$ for every
encoding $\Enc$.
\end{corollary}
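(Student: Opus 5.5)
The plan is to reduce Corollary~\ref{cor:kernel-invariance} to Lemma~\ref{prop:fiber-invariance}. That lemma says that $\delta_{\Enc}$ depends on the function only through which pairs of messages it identifies. So it suffices to show that, for linear $f$ and $g$ with $\ker f=\ker g$, the equivalence
\[
f(u)=f(v)\quad\Longleftrightarrow\quad g(u)=g(v)
\]
holds for all $u,v\in V$. Once this is established, the lemma gives $\delta_{\Enc}(f)=\delta_{\Enc}(g)$ for every encoding $\Enc$. No linearity of $\Enc$ is needed, since the lemma applies to arbitrary encodings.

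The key step is the fiber computation. Because $f$ is linear, we have $f(u)=f(v)$ if and only if $f(u)-f(v)=0$, that is, $f(u-v)=0$, that is, $u-v\in\ker f$. The same argument applied to $g$ gives $g(u)=g(v)$ if and only if $u-v\in\ker g$. Since $\ker f=\ker g$, the two conditions coincide. This is the same as the remark preceding the statement: the fibers of each map are exactly the cosets of the common kernel.

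There is no real obstacle here. The only point to watch is that $A$ and $B$ are vector spaces over $\F_q$, so that subtraction and linearity make sense there; the hypothesis that $f$ and $g$ are linear implicitly includes this.
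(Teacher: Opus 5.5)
Your proposal is correct and follows the paper's argument. The paper likewise notes that the fibers of a linear map are the cosets of its kernel, so $\ker f=\ker g$ forces $f(u)=f(v)\iff g(u)=g(v)$, and then applies Lemma~\ref{prop:fiber-invariance} for arbitrary $\Enc$; your explicit check $f(u)=f(v)\iff u-v\in\ker f$ spells out the step the paper leaves implicit.
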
 

\subsection{UEP parameters}  \label{sec:uepdef}
In~\cite{masnick1967}, Masnick and Wolf assign to each output coordinate of a linear code an error protection level $f_i$.
 Instead of using $f_i$ directly, we record the parameter $a_i(C)$ as in Definition \ref{def:output-separation} below. We will explain the reason for this choice in Remark \ref{rem:masnick-wolf-level}.
\begin{definition} \label{def:output-separation}
For a linear code $C\subseteq\F_q^n$ and $i\in\{1,\dots,n\}$, the
\emph{$i$th output separation} of $C$ is
\[
a_i(C):=
\min\{\wt(c):c\in C,\ c_i\neq0\}.
\]
\end{definition}

Dunning and Robbins~\cite{dunning1978} introduced the separation vector to guarantee protection for the input message in specified positions. We recall its definition. 
\begin{definition}
\label{def:input-separation}
Let $\Enc\colon V\to C\subseteq\F_q^n$ be an encoding and $\mathcal B=(b_1,\dots,b_k)$
a basis of $V$. For $1\le i\le k$, the \emph{$i$th input separation} of $\Enc$
relative to $\mathcal B$ is
\[
s_i^{\mathcal B}(\Enc)
:=\min\bigl\{\,d_H\bigl(\Enc(u),\Enc(v)\bigr)\;:\;u,v\in V,\
\bigl([u]_{\mathcal B}\bigr)_i\neq\bigl([v]_{\mathcal B}\bigr)_i\,\bigr\},
\]
and $s^{\mathcal B}(\Enc)=(s_1^{\mathcal B}(\Enc),\dots,s_k^{\mathcal B}(\Enc))$
is the \emph{separation vector} of $\Enc$. 
\end{definition}
When $\Enc$ is linear with generator matrix $G=G_{\mathcal B}(\Enc)$, each entry of the separation vector takes the form
\[
s_i(G):=s_i^{\mathcal B}(\Enc)=\min\{\wt(aG):a\in\F_q^k,\ a_i\neq0\}.
\]
This is the notion of separation vector used in  van Gils \cite{vangils1983}. 

In~\cite{linlin1988}, Lin and Lin  partition the message space
into a product $A_1\times\cdots\times A_m$ of binary blocks
$A_i=\F_2^{k_i}$ and measure the protection of the $i$th block by a separation vector with $m$ components, which we generalize to $\F_q$ as follows.  
\begin{definition}
\label{def:component-separation}
Let $\mathcal V=(V_1,\dots,V_m)$ be a direct-sum decomposition of $V$, so that $ V=V_1\oplus\cdots\oplus V_m,$ and let $\rho_i\colon V\to V_i$ be the $i$th component projection.
For an encoding $\Enc\colon V\to C\subseteq\F_q^n$,
   the \emph{$i$th component separation} is
\[
s_i^{\mathcal V}(\Enc)
:=
\min_{\substack{u,v\in V\\
                 \rho_i(u)\neq\rho_i(v)}}
 d_H\bigl(\Enc(u),\Enc(v)\bigr),
\qquad 1\leq i\leq m.
\]
The vector
\[
\mathbf s^{\mathcal V}(\Enc)
:=
\bigl(s_1^{\mathcal V}(\Enc),\dots,
      s_m^{\mathcal V}(\Enc)\bigr)
\]
is the \emph{component  separation vector} of $\Enc$
relative to $\mathcal V$.
\end{definition}

\section{A correspondence between function-separation distances and UEP parameters}
\label{sec:fcc-uep}

In this section, we first motivate the function-separation viewpoint by showing that the three classical UEP parameters $a_i, s_i,$ and $s_i^\mathcal{V}$ in Subsection~\ref{sec:uepdef} are
function-separation distances of the corresponding coordinate and projection maps. This is the content of Subsection \ref{sec:dictionary}. 

In Subsection~\ref{sec:subspace} we generalize these observations, for linear encodings,  from coordinate and projection maps to arbitrary linear functions. Conversely, for a linear encoding and a linear function $f$, the function-separation distance is a UEP parameter under a suitable basis change.

\subsection{UEP parameters as function-separation distances}
\label{sec:dictionary}

For $i\in\{1,\dots,n\}$, let $\pr_i\colon\F_q^n\to\F_q$ denote the projection onto the $i$th
coordinate. 
The map $\pr_i\circ\Enc$ is the $i$th transmitted symbol viewed as a function of
the source message.
 
\begin{proposition}\label{thm:output-uep}
Let $\Enc\colon V\to C\subseteq\F_q^n$ be a linear encoding. Then for $1 \le i \le n$, 
\[
\delta_{\Enc}(\pr_i\circ\Enc)=a_i(C).
\]
\end{proposition}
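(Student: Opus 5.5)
The plan is to unfold both sides and use linearity to turn pairs into single codewords. By Definition~\ref{def:function-separation},
\[
\delta_{\Enc}(\pr_i\circ\Enc)=\min\bigl\{d_H(\Enc(u),\Enc(v)) : u,v\in V,\ \Enc(u)_i\neq\Enc(v)_i\bigr\}.
\]
Since $\Enc$ is linear, I will rewrite both the distance and the condition in terms of $w:=u-v$. We have $d_H(\Enc(u),\Enc(v))=\wt(\Enc(u)-\Enc(v))=\wt(\Enc(w))$. Likewise $\Enc(u)_i-\Enc(v)_i=\Enc(w)_i$, so the condition $\Enc(u)_i\ne\Enc(v)_i$ holds exactly when $\Enc(w)_i\neq 0$.

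Next I will show that the two sets of values coincide, so that the minima agree. For one direction, every admissible pair $(u,v)$ gives the codeword $c=\Enc(u-v)\in C$, which satisfies $c_i\neq0$ and $\wt(c)=d_H(\Enc(u),\Enc(v))$. Conversely, take any $c\in C$ with $c_i\neq 0$. Because $\Enc$ is a bijection onto $C$, we can write $c=\Enc(w)$ for some $w\in V$. The pair $(u,v)=(w,0)$ is then admissible, and its distance is $d_H(\Enc(w),\Enc(0))=\wt(c)$, using $\Enc(0)=0$. Hence the set of distances over admissible pairs equals $\{\wt(c):c\in C,\ c_i\neq0\}$, and taking minima gives $\delta_{\Enc}(\pr_i\circ\Enc)=a_i(C)$.

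The only point needing care is the degenerate case where coordinate $i$ vanishes on all of $C$. In that case $\pr_i\circ\Enc$ is constant, so no pair of messages has distinct function values. Both minima are then taken over the empty set and equal $+\infty$ by the convention $\min\emptyset=+\infty$, so the identity still holds. Otherwise the argument is a routine translation, and I expect no real obstacle.
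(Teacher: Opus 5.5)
Your proof is correct and follows essentially the same route as the paper: substitute $w=u-v$, use linearity to turn the condition into $\Enc(w)_i\neq0$ and the distance into $\wt(\Enc(w))$, then use that $\Enc$ maps $V$ onto $C$ so the two minima agree. You also make explicit the surjectivity step and the empty-set case, which the paper leaves implicit via the convention $\min\emptyset=+\infty$.
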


\begin{proof} Let $u,v \in V$,  $w:=u-v$ and let $c:=\Enc(w)$. 
Since $\Enc$ is linear,  
\[ 
(\pr_i\circ\Enc)(u)\neq(\pr_i\circ\Enc)(v)
 \iff (\Enc(w))_i\neq0 \iff c_i \ne 0. 
\]
Also, $d_H\bigl(\Enc(u),\Enc(v)\bigr)=\wt\bigl(\Enc(w)\bigr)=\wt\bigl(c\bigr)$.
Since $(u,v)$ ranges over $V \times V$, the difference $w$ ranges over $V$, and taking minimum gives the claim.
\end{proof}

\begin{remark}\label{rem:masnick-wolf-level}
We note that Masnick and Wolf~\cite{masnick1967} assign the $i$th output coordinate an error protection level $f_i$, which is the number of errors against which it
is guaranteed correct. 
In our notation, this means $f_i=\lfloor(a_i(C)-1)/2\rfloor$.
The map $a_i(C)\mapsto f_i$ is two-to-one. 
Consequently, the level $f_i$ determines the function-separation distance $\delta_{\Enc}(\pr_i\circ\Enc)$ only up to the two values $2f_i+1$ and
$2f_i+2$, whereas $a_i(C)$ determines it exactly.
\end{remark}

 For a basis $\mathcal B=(b_1,\dots,b_k)$ of $V$, $i\in\{1,\dots,k\}$, and
$\emptyset\neq J\subseteq\{1,\dots,k\}$, let
$\pi_i^{\mathcal B}(v):=([v]_{\mathcal B})_i$ be the $i$th coordinate of
$v$ in that basis, and let $\pi_J^{\mathcal B}(v):=([v]_{\mathcal B})|_J$
record the coordinates indexed by $J$.

\begin{proposition}\label{thm:input-uep}
Let $\Enc\colon V\to C\subseteq\F_q^n$ be an encoding and $\mathcal B$ a basis of
$V$. Then for $1\le i\le k$,
\[
\delta_{\Enc}(\pi_i^{\mathcal B})=s_i^{\mathcal B}(\Enc).
\]
In particular, if $\Enc$ is linear, then
$\delta_{\Enc}(\pi_i^{\mathcal B})=s_i\bigl(G_{\mathcal B}(\Enc)\bigr)$.
\end{proposition}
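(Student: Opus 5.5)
The first identity is a direct unwinding of the definitions, in the same spirit as Proposition~\ref{thm:output-uep}. By Definition~\ref{def:function-separation} with $f=\pi_i^{\mathcal B}$,
\[
\delta_{\Enc}(\pi_i^{\mathcal B})=\min\bigl\{d_H(\Enc(u),\Enc(v)) : u,v\in V,\ \pi_i^{\mathcal B}(u)\neq\pi_i^{\mathcal B}(v)\bigr\}.
\]
Since $\pi_i^{\mathcal B}(u)=([u]_{\mathcal B})_i$, the index set here is literally the one in Definition~\ref{def:input-separation}. The two minima are therefore equal, including the convention $\min\emptyset=+\infty$; that case cannot occur anyway, because $b_i$ and $0$ differ in coordinate $i$. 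No linearity is needed for this step.

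For the linear case, I will reduce to differences exactly as in the proof of Proposition~\ref{thm:output-uep}. Set $w=u-v$. Since the coordinate map $[\,\cdot\,]_{\mathcal B}$ is linear, $([u]_{\mathcal B})_i\neq([v]_{\mathcal B})_i$ holds if and only if $([w]_{\mathcal B})_i\neq0$. Since $\Enc$ is linear, $d_H(\Enc(u),\Enc(v))=\wt(\Enc(w))=\wt([w]_{\mathcal B}\,G_{\mathcal B}(\Enc))$, using the generator-matrix identity from Definition~\ref{def:encodings}.

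As $(u,v)$ ranges over $V\times V$, $w$ ranges over all of $V$. As $w$ ranges over $V$, $a:=[w]_{\mathcal B}$ ranges over all of $\F_q^k$, because the coordinate map is bijective. The minimum therefore becomes $\min\{\wt(aG):a\in\F_q^k,\ a_i\neq0\}=s_i(G_{\mathcal B}(\Enc))$.

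There is no genuine obstacle. The only care needed is to check that the substitution $w=u-v$, $a=[w]_{\mathcal B}$ is surjective onto $\{a:a_i\neq0\}$, so that the two minima range over matching sets.
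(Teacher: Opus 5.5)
Your proposal is correct and follows the paper's proof: the first identity comes directly from Definitions~\ref{def:function-separation} and~\ref{def:input-separation}, and the linear case reduces to the matrix form $s_i(G)=\min\{\wt(aG):a\in\F_q^k,\ a_i\neq0\}$ recorded in Subsection~\ref{sec:uepdef}. The only difference is that you derive that matrix form explicitly via the substitution $w=u-v$, $a=[w]_{\mathcal B}$ (as in Proposition~\ref{thm:output-uep}), whereas the paper simply cites it.
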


\begin{proof}
The identity $\delta_{\Enc}(\pi_i^{\mathcal B})=s_i^{\mathcal B}(\Enc)$ follows from
Definitions~\ref{def:function-separation}
and~\ref{def:input-separation}. The linear case follows from the
matrix form of the separation vector in
Subsection~\ref{sec:uepdef}.
\end{proof}

\begin{corollary}\label{cor:block-projection}
Let $\Enc\colon V\to C\subseteq\F_q^n$ be an encoding, let $\mathcal B$ be a
basis of $V$, and let $\emptyset\neq J\subseteq\{1,\dots,k\}$. Then
\[
\delta_{\Enc}(\pi_J^{\mathcal B})
=
\min_{i\in J}s_i^{\mathcal B}(\Enc).
\]
In particular, if $\Enc$ is linear, then
$\delta_{\Enc}(\pi_J^{\mathcal B})
=\min_{i\in J}s_i\bigl(G_{\mathcal B}(\Enc)\bigr)$.
\end{corollary}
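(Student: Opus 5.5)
The plan is to unwind the definition of $\delta_{\Enc}(\pi_J^{\mathcal B})$ and compare it directly with the definition of the input separations $s_i^{\mathcal B}(\Enc)$ from Definition~\ref{def:input-separation}. The key observation concerns the condition $\pi_J^{\mathcal B}(u)\neq\pi_J^{\mathcal B}(v)$. It holds exactly when there is some index $i\in J$ with $([u]_{\mathcal B})_i\neq([v]_{\mathcal B})_i$. Consequently, the set of pairs over which $\delta_{\Enc}(\pi_J^{\mathcal B})$ is minimized is the union, over $i\in J$, of the sets of pairs over which $s_i^{\mathcal B}(\Enc)$ is minimized.

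The general identity then follows in two inequalities.

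For ``$\le$'': fix $i\in J$. Every pair $(u,v)$ with $([u]_{\mathcal B})_i\neq([v]_{\mathcal B})_i$ also satisfies $\pi_J^{\mathcal B}(u)\neq\pi_J^{\mathcal B}(v)$. Hence $\delta_{\Enc}(\pi_J^{\mathcal B})\le s_i^{\mathcal B}(\Enc)$, and we take the minimum over $i\in J$.

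For ``$\ge$'': take any pair $(u,v)$ with $\pi_J^{\mathcal B}(u)\neq\pi_J^{\mathcal B}(v)$. Choose a witnessing index $i\in J$ where the coordinates differ. Then
\[
d_H\bigl(\Enc(u),\Enc(v)\bigr)\ge s_i^{\mathcal B}(\Enc)\ge\min_{j\in J}s_j^{\mathcal B}(\Enc),
\]
and we minimize over all such pairs.

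This is simply the statement that a minimum over a finite union of sets equals the minimum of the minima. Empty index sets are handled by the convention $\min\emptyset=+\infty$; they do not arise in practice, since $V\neq0$ gives pairs differing in each coordinate whenever $k\ge1$.

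Alternatively, one could write $\delta_{\Enc}(\pi_J^{\mathcal B})$ through Proposition~\ref{thm:input-uep} and then apply the union argument to the functions $\pi_i^{\mathcal B}$. This amounts to the same thing, because $\delta_{\Enc}(\pi_i^{\mathcal B})=s_i^{\mathcal B}(\Enc)$.

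For the linear case, we combine the general identity with the second part of Proposition~\ref{thm:input-uep}, which says $s_i^{\mathcal B}(\Enc)=s_i(G_{\mathcal B}(\Enc))$. No step here is a real obstacle. The only point needing care is the set-theoretic identification of the separating pairs as a union; everything else follows directly from the definitions.
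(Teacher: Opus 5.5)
Your proposal is correct and matches the paper's intended argument. The paper gives no separate proof and treats the corollary as immediate from Proposition~\ref{thm:input-uep}. Your decomposition of the separating pairs for $\pi_J^{\mathcal B}$ as the union over $i\in J$ of those for $\pi_i^{\mathcal B}$, together with the linear case of that proposition, is exactly the unpacking the paper leaves implicit.
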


\begin{proposition}\label{thm:component-uep}
Let $\mathcal V=(V_1,\dots,V_m)$ be a direct-sum decomposition of
$V$, let $\rho_i\colon V\to V_i$ be the component projections, and
let $\Enc\colon V\to C\subseteq\F_q^n$ be an encoding. Then
for $1\leq i\leq m$,
\[
\delta_{\Enc}(\rho_i)=s_i^{\mathcal V}(\Enc).
\]
\end{proposition}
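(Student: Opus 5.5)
This is a direct unfolding of definitions, in the same way as Proposition~\ref{thm:input-uep}. Fix $1\le i\le m$. By Definition~\ref{def:function-separation} applied to the function $f=\rho_i\colon V\to V_i$, we have $\Imf(\rho_i)=V_i$ and
\[
\delta_{\Enc}(\rho_i)=\min_{\substack{u,v\in V\\ \rho_i(u)\neq\rho_i(v)}} d_H\bigl(\Enc(u),\Enc(v)\bigr).
\]
By Definition~\ref{def:component-separation}, $s_i^{\mathcal V}(\Enc)$ is the minimum of the same quantity over the same index set of pairs. So the two numbers coincide, including the degenerate case in which the index set is empty and both sides equal $+\infty$ by the convention $\min\emptyset=+\infty$.

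No linearity of $\Enc$ is needed, and I will say so explicitly, since the statement is for arbitrary encodings. It is also worth recording how Lemma~\ref{prop:fiber-invariance} fits, as a consistency check. The condition $\rho_i(u)\neq\rho_i(v)$ is intrinsic to the decomposition $\mathcal V$. Replacing $\rho_i$ by any injective relabelling of $V_i$ leaves the fibers unchanged, so $\delta_{\Enc}$ is unchanged. If one prefers, one can also note $\ker\rho_i=\bigoplus_{j\neq i}V_j$ and invoke Corollary~\ref{cor:kernel-invariance}, but this is not required for the proof.

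I expect no genuine obstacle. The only point needing care is to check that the minimization sets are literally identical: $\rho_i$ is the projection along $\bigoplus_{j\ne i}V_j$, exactly as in Definition~\ref{def:component-separation}. The proof will therefore be a two-line verification.
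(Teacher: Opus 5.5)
Your proposal is correct and matches the paper's proof, which likewise observes that both sides are the same minimum over the same set of pairs by Definitions~\ref{def:function-separation} and~\ref{def:component-separation}. The asides are accurate but not needed for the argument: no linearity is used, both sides equal $+\infty$ when $V_i=0$, and the kernel-invariance consistency check holds.
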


\begin{proof}
The two sides are the same minimum by
Definitions~\ref{def:function-separation} and
\ref{def:component-separation}.
\end{proof}

\subsection{Function-separation distance and the separation-vector representation for linear encodings}
\label{sec:subspace}
In this subsection we restrict to linear encodings. We first generalize the identifications above from coordinate
and projection maps to arbitrary linear functions.
For nested linear codes $D\subsetneq C\subseteq\F_q^n$, we define
\[
d(C,D):=\min\{\wt(c):c\in C\setminus D\}.
\]
This is the first relative generalized Hamming weight $M_1(C,D)$ of~\cite{luo2005}, in the minimum-weight form
of~\cite[Lemma~1]{liu2008} (see also \cite{geil2014,zhuang2013}), and coincides with the coset-code distance $d(C/D)$ of~\cite[Definition~17]{rajput2026}.

\begin{theorem}
\label{thm:kernel-formula}
Let $\Enc\colon V\to C \subseteq \F_q^n$ be a linear encoding. 
Let $f\colon V\to\F_q^\ell$ be a nonzero linear function.
Then
\[
\delta_{\Enc}(f) = d(C,\Enc(\ker f)).
\]
\end{theorem}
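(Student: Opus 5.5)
The plan is to reduce both sides to a minimum over the single difference vector $w=u-v$, exactly as in the proof of Proposition~\ref{thm:output-uep}. Since $f$ is linear, $f(u)\neq f(v)$ holds if and only if $w\notin\ker f$. Since $\Enc$ is linear, $d_H\bigl(\Enc(u),\Enc(v)\bigr)=\wt\bigl(\Enc(w)\bigr)$. As $(u,v)$ ranges over $V\times V$, the difference $w$ ranges over all of $V$ (take $v=0$). This gives
\[
\delta_{\Enc}(f)=\min\{\wt(\Enc(w)):w\in V\setminus\ker f\}.
\]

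The second step is to transport this minimum to the code. Because $\Enc$ is a bijection $V\to C$, it maps $V\setminus\ker f$ bijectively onto $C\setminus\Enc(\ker f)$. Indeed, $\Enc(w)\in\Enc(\ker f)$ if and only if $w\in\ker f$, by injectivity. Substituting $c=\Enc(w)$ therefore turns the minimum above into $\min\{\wt(c):c\in C\setminus\Enc(\ker f)\}=d(C,\Enc(\ker f))$.

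The remaining point is to check that $d(C,D)$ is well defined here, which requires $D=\Enc(\ker f)$ to be a proper subspace of $C$. Because $f$ is nonzero, $\ker f\subsetneq V$. Injectivity of $\Enc$ then gives $\Enc(\ker f)\subsetneq C$, and $\Enc(\ker f)$ is a linear subspace as the image of a subspace under a linear map. Hence both minima range over nonempty sets, so neither side is $+\infty$.

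There is no real obstacle in this argument. The only care needed is the use of the nonzero hypothesis to guarantee the strict inclusion and the nonemptiness of the index set.
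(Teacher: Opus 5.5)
Your proposal is correct and follows the paper's proof essentially step for step. Both reduce to the difference $w=u-v$, use linearity of $f$ and of $\Enc$, and then transport the minimum via the bijection $V\setminus\ker f\to C\setminus\Enc(\ker f)$. Your extra check that $\Enc(\ker f)\subsetneq C$, using that $f$ is nonzero, is a small addition the paper leaves implicit.
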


\begin{proof} Let $U:=\ker f$ and $D:=\Enc(U)$.  
Let $u,v\in V$, and  $w:=u-v$. Since $f$ is linear, we have
\[
f(u)\neq f(v)
\quad\Longleftrightarrow\quad
f(w)\neq0
\quad\Longleftrightarrow\quad
w\notin U.
\]
Since $\Enc$ is linear, we have
$
d_H\bigl(\Enc(u),\Enc(v)\bigr)
=
\wt\bigl(\Enc(w)\bigr).
$
As $(u,v)$ ranges over $V \times V$, the difference $w$ ranges over $V$.
Consequently,
\[
\delta_{\Enc}(f)
=
\min_{w\in V\setminus U}\wt\bigl(\Enc(w)\bigr).
\]
The map $\Enc$ is a bijection from $V$ onto $C$. Hence,
$
\Enc(V\setminus U)=C\setminus D,
$
and so 
\[
\delta_{\Enc}(f)
=
\min\{\wt(c):c\in C\setminus D\}
=
d(C,D).
\]
This proves the theorem.
\end{proof}

\begin{remark}\label{rem:kernel-specializations}
For linear encodings, the identifications of
Subsection~\ref{sec:dictionary} are instances of
Theorem~\ref{thm:kernel-formula}.  
Taking $f=\pr_i\circ\Enc$, with
$\Enc(\ker f)=\{c\in C:c_i=0\}$, recovers
Proposition~\ref{thm:output-uep}.
Taking $f=\pi_i^{\mathcal B}$ and
$f=\pi_J^{\mathcal B}$ recovers the linear cases of
Proposition~\ref{thm:input-uep} and
Corollary~\ref{cor:block-projection}.
Taking $f=\rho_i$ recovers the linear
case of Proposition~\ref{thm:component-uep}. 
For arbitrary encodings
however, Propositions~\ref{thm:input-uep}
and~\ref{thm:component-uep} are not covered by
Theorem~\ref{thm:kernel-formula}.
\end{remark}

Specializing Theorem~\ref{thm:kernel-formula} to systematic encodings, we obtain the following characterization.
\begin{corollary}
\label{cor:linear-fcc-criterion}
Let $\Enc\colon V\to C\subseteq\F_q^{k+r}$ be a linear encoding. Let $f\colon V\to \F_q^\ell$ be
a nonzero linear function. Assume that $\Enc$ is systematic
relative to $\mathcal E$. Then for every 
$t\geq0$, the encoding $\Enc$ is an $(f,t)$-function-correcting code
if and only if
\[
d(C,\Enc(\ker f))\geq 2t+1.
\]
\end{corollary}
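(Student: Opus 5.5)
This corollary follows by combining Definition~\ref{def:fcc}, Definition~\ref{def:function-separation} and Theorem~\ref{thm:kernel-formula}. First, since $\Enc$ is systematic relative to $\mathcal E$, it meets the standing hypothesis of Definition~\ref{def:fcc}. Hence $\Enc$ is an $(f,t)$-function-correcting code exactly when $d_H(\Enc(u),\Enc(v))\geq 2t+1$ for all $u,v\in V$ with $f(u)\neq f(v)$.

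Second, I will rewrite this condition using Definition~\ref{def:function-separation}. The implication holds for all such pairs if and only if the minimum of $d_H(\Enc(u),\Enc(v))$ over these pairs is at least $2t+1$, that is, if and only if $\delta_{\Enc}(f)\geq 2t+1$. Because $f$ is nonzero, there exist $u,v$ with $f(u)\neq f(v)$. The minimum is therefore taken over a nonempty set, so it is a finite number and the convention $\min\emptyset=+\infty$ does not enter.

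Third, I will apply Theorem~\ref{thm:kernel-formula}. Its hypotheses are that $\Enc$ is a linear encoding and $f$ is a nonzero linear function into $\F_q^\ell$, and both hold here. The theorem gives $\delta_{\Enc}(f)=d(C,\Enc(\ker f))$. Note that $\ker f\neq V$ because $f\neq0$, and $\Enc$ is injective, so $\Enc(\ker f)\subsetneq C$; this is exactly the nesting required for $d(C,D)$ to be defined. Substituting into the previous step yields the stated equivalence.

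I do not expect a real obstacle. The only step needing care is checking that the nonzero hypothesis on $f$ is what makes both the minimum finite and the relative distance $d(C,\Enc(\ker f))$ well defined.
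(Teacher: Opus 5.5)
Your proposal is correct and follows the paper's route: for a systematic encoding, being an $(f,t)$-function-correcting code is exactly the condition $\delta_{\Enc}(f)\geq 2t+1$, and substituting Theorem~\ref{thm:kernel-formula} gives the stated criterion. Your checks on the nonzero hypothesis (the minimum is over a nonempty set, and $\Enc(\ker f)\subsetneq C$) are careful additions but not a change of approach.
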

Corollary~\ref{cor:linear-fcc-criterion} recovers the characterization of systematic linear FCCs for linear
functions in Rajput et al. \cite[Section VII.A]{rajput2026}. 
Their coset-code distance $d(C/D_f)$ coincides with the  distance
$d\bigl(C,\Enc(\ker f)\bigr)$.

By Corollary~\ref{cor:kernel-invariance}, replacing a linear
function by a surjective linear function with the same kernel does
not change its function-separation distance for any encoding.
Thus, after identifying $\Imf(f)$ with $\F_q^\ell$, where
$\ell=\rank(f)$, we can assume without loss
of generality that  $f\colon V\to\F_q^\ell$ is surjective.

We now show that with a suitable choice of basis for $V$, the function-separation distance $\delta_{\Enc}(f)$ can be expressed in terms of input separations of the generator matrix. 

\begin{theorem}\label{thm:adapted-basis}
Let $\Enc\colon V\to C\subseteq\F_q^n$ be a linear encoding, let
$f\colon V\to\F_q^\ell$ be a surjective linear function.
Let $U=\ker f$.
Let $\mathcal B=(b_1,\dots,b_k)$ be a basis of $V$ with
$U=\operatorname{span}\{b_{\ell+1},\dots,b_k\}$. Then
\[
\delta_{\Enc}(f)
=
\min_{1\leq i\leq \ell}s_i\bigl(G_{\mathcal B}(\Enc)\bigr).
\]

\end{theorem}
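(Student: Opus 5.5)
The plan is to reduce the statement to Corollary~\ref{cor:block-projection} by means of Corollary~\ref{cor:kernel-invariance}. Set $J:=\{1,\dots,\ell\}$ and consider the block projection $\pi_J^{\mathcal B}\colon V\to\F_q^\ell$, $v\mapsto([v]_{\mathcal B})|_J$. This map is linear. Its kernel consists of those $v$ whose first $\ell$ coordinates in $\mathcal B$ vanish, that is, $\ker\pi_J^{\mathcal B}=\operatorname{span}\{b_{\ell+1},\dots,b_k\}=U=\ker f$.

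Since $f$ and $\pi_J^{\mathcal B}$ are linear maps with the same kernel, Corollary~\ref{cor:kernel-invariance} gives $\delta_{\Enc}(f)=\delta_{\Enc}(\pi_J^{\mathcal B})$. Corollary~\ref{cor:block-projection}, in its linear form, then gives
\[
\delta_{\Enc}(\pi_J^{\mathcal B})=\min_{i\in J}s_i\bigl(G_{\mathcal B}(\Enc)\bigr),
\]
and combining the two equalities yields the claim.

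As a cross-check, the statement can also be derived directly from Theorem~\ref{thm:kernel-formula}. Writing $a=[w]_{\mathcal B}$, we have $w\notin U$ if and only if $a_i\neq0$ for some $i\le\ell$. Hence $\delta_{\Enc}(f)=\min\{\wt(aG):a_i\neq0\text{ for some }i\le\ell\}$, and this is exactly $\min_{i\le\ell}\min\{\wt(aG):a_i\neq0\}$. There is no substantive obstacle. The only point that needs checking is the kernel computation for $\pi_J^{\mathcal B}$, which relies on $\mathcal B$ being a basis, so that coordinates are unique. Surjectivity of $f$ only serves to guarantee that $\dim U=k-\ell$, which makes the indexing consistent.
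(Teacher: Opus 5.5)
Your proof is correct, but your main route differs from the paper's. The paper does not use Corollaries~\ref{cor:kernel-invariance} and~\ref{cor:block-projection}. It invokes Theorem~\ref{thm:kernel-formula} to write $\delta_{\Enc}(f)=\min_{v\in V\setminus U}\wt(\Enc(v))$. It then rewrites $v\notin U$ as $(a_1,\dots,a_\ell)\neq0$ for $a=[v]_{\mathcal B}$ and splits this set as the union of the sets $\{a:a_i\neq0\}$, $1\le i\le\ell$. This is exactly your ``cross-check'' paragraph.

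Your main argument instead composes two earlier corollaries. The union split is hidden inside Corollary~\ref{cor:block-projection}, which the paper states without proof; its proof is the same decomposition of the pairs $u,v$ with $\pi_J^{\mathcal B}(u)\neq\pi_J^{\mathcal B}(v)$.

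What your route buys is generality. Neither Corollary~\ref{cor:kernel-invariance} nor the first assertion of Corollary~\ref{cor:block-projection} uses linearity of $\Enc$. So you actually obtain $\delta_{\Enc}(f)=\min_{1\le i\le\ell}s_i^{\mathcal B}(\Enc)$ for an arbitrary encoding and a linear $f$. Linearity of $\Enc$ is needed only to rewrite $s_i^{\mathcal B}(\Enc)$ as $s_i(G_{\mathcal B}(\Enc))$.

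The paper's route is self-contained and ties the result to the identification $\delta_{\Enc}(f)=d(C,\Enc(\ker f))$. Because it passes through that relative-distance formula, it is confined to linear encodings.
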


\begin{proof}
Once the basis $\mathcal{B}$ is fixed, for $v\in V$ we write $a=(a_1,\dots,a_k):=[v]_{\mathcal B}$, so that
$v=a_1b_1+\cdots+a_kb_k$.
By Definition~\ref{def:encodings},
$\Enc(v)=aG_{\mathcal B}(\Enc)$.
Since $b_{\ell+1},\dots,b_k$ form a
basis of $U$,
\[
v\in U
\quad\Longleftrightarrow\quad
a_1=\cdots=a_\ell=0.
\]
Hence, by Theorem~\ref{thm:kernel-formula},
\[
\delta_{\Enc}(f)
=
\min_{v\in V\setminus U}\wt\bigl(\Enc(v)\bigr)
=
\min_{\substack{a\in\F_q^k\\(a_1,\dots,a_\ell)\neq0}}
\wt\bigl(aG_{\mathcal B}(\Enc)\bigr).
\]
Since
$\{a\in\F_q^k:(a_1,\dots,a_\ell)\neq0\}
=\bigcup_{i=1}^\ell\{a\in\F_q^k:a_i\neq0\}$, Definition~\ref{def:input-separation} gives
\[
\min_{\substack{a\in\F_q^k\\(a_1,\dots,a_\ell)\neq0}}
\wt\bigl(aG_{\mathcal B}(\Enc)\bigr)
=
\min_{1\leq i\leq \ell}s_i\bigl(G_{\mathcal B}(\Enc)\bigr),
\]
which proves the theorem.
\end{proof}
We call   $\min_{1\leq i\leq \ell}s_i\bigl(G_{\mathcal B}(\Enc)\bigr)$   the \emph{separation-vector
representation} of $\delta_{\Enc}(f)$ with respect to the basis $\mathcal B$.

\begin{remark} 
\label{rem:basis-change-systematicity}
Let $M_{\mathcal B}\in\F_q^{k\times k}$ be the invertible matrix
whose $i$th row is $[b_i]_{\mathcal E}$. Then
\[
G_{\mathcal B}(\Enc)=M_{\mathcal B}\,G_{\mathcal E}(\Enc).
\]
So if $\Enc$ is systematic relative to $\mathcal E$, say
$G_{\mathcal E}(\Enc)=[\,I_k\mid P\,]$, then
\[
G_{\mathcal B}(\Enc)=[\,M_{\mathcal B}\mid M_{\mathcal B}P\,],
\]
which has systematic form only when $\mathcal B=\mathcal E$. 
Hence an encoding $\Enc$ that is systematic relative to $\mathcal E$ is not systematic relative
to any basis $\mathcal B\neq\mathcal E$. 
It is this incompatibility  that motivates us to introduce and compare free and systematic redundancies in the next section. 
\end{remark}

\section{Optimal redundancy and the systematicity gap}
\label{sec:systematicity}
As in Section \ref{sec:subspace}, we assume throughout this section without loss
of generality that  $f\colon V\to\F_q^\ell$ is surjective.
We now consider the following problem.

\begin{quote}
\emph{The free linear separation problem.}
Given a surjective linear function $f\colon V\to\F_q^\ell$ and a
prescribed separation $d$, determine the least redundancy of a
linear encoding $\Enc$ with $\delta_{\Enc}(f)\geq d$.
\end{quote}

The answer depends on whether the encoding is required to be systematic
relative to $\mathcal E$, and we introduce the two corresponding
optimal redundancies.

\begin{definition}
\label{def:free-systematic-redundancy}
Let $f\colon V\to\F_q^\ell$ be a linear function.

The \emph{optimal free linear redundancy} $r_f^{\mathrm{free,lin}}(d)$
is the smallest $r\geq0$ such that there exists a linear encoding
$\Enc\colon V\to C\subseteq\F_q^{k+r}$ with $\delta_{\Enc}(f)\geq d$.

The \emph{optimal systematic linear redundancy}
$r_{f,\mathcal E}^{\mathrm{sys,lin}}(d)$ is the smallest $r\geq0$
such that there exists a linear encoding
$\Enc\colon V\to C\subseteq\F_q^{k+r}$, systematic relative to $\mathcal E$,
with $\delta_{\Enc}(f)\geq d$.

We call the difference
\[
r_{f,\mathcal E}^{\mathrm{sys,lin}}(d)
-
r_f^{\mathrm{free,lin}}(d)
\]
the \emph{systematicity gap} of $f$ relative to
$\mathcal E$ at prescribed separation $d$.
Since $\mathcal E$ is fixed, we write $r_f^{\mathrm{sys,lin}}(d)$
for $r_{f,\mathcal E}^{\mathrm{sys,lin}}(d)$.
\end{definition}

\begin{remark}
\label{rem:optimal-redundancy-relation}
Let $f$ be a linear function and $d=2t+1$. 
The minimum in
Definition~\ref{def:optimal-redundancy} runs over all systematic encodings, not necessarily linear, so
\[
r_f(k,t)\leq r_f^{\mathrm{sys,lin}}(2t+1).
\]
For every $d\geq1$, it is also clear that
\[
r_f^{\mathrm{free,lin}}(d)\leq r_f^{\mathrm{sys,lin}}(d).
\] 
However, in general  we cannot compare $r_f(k,t)$ and $r_f^{\mathrm{free,lin}}(2t+1)$, since the former requires systematicity for encodings.
\end{remark}

For $\ell\geq1$ and $d\geq1$, let
\[
\Nlin(\ell,d)
:=
\min\{N:\text{there exists a linear }[N,\ell,\geq d]_q\text{ code}\}.
\]
Following van Gils~\cite[Definition~4]{vangils1983}, for a  vector
$\mathbf s=(s_1,\dots,s_m)\in\mathbb N^m$, let $n_q(\mathbf s)$
denote the shortest length of an $m$-dimensional linear code over
$\F_q$ admitting a generator matrix $G$ with
$s_i(G)\geq s_i$ for $1\leq i\leq m$. In particular,
$n_q(d,\dots,d)=\Nlin(\ell,d)$ for $\ell$-dimensional codes. We further record two properties of $n_q$ from  van Gils~\cite[Theorem~4, Corollary~9]{vangils1983}.
\begin{lemma}
\label{lem:vangils-bounds}
Let $\mathbf s\in\mathbb N^m$.
\begin{enumerate}
\item For any partition $\mathbf s=(\mathbf s_1\mid\mathbf s_2)$,
\[
n_q(\mathbf s_1\mid\mathbf s_2)\leq n_q(\mathbf s_1)+n_q(\mathbf s_2).
\]
\item If $\mathbf s$ is nonincreasing, then for $1\leq j<m$,
\[
n_q(s_1,\dots,s_m)\geq j+n_q(s_1,\dots,s_{m-j}).
\]
\end{enumerate}
\end{lemma}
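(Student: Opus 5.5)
The two properties are elementary consequences of the definition $s_i(G)=\min\{\wt(aG):a\in\F_q^m,\ a_i\neq0\}$. I would prove both directly from generator matrices rather than quote~\cite{vangils1983}. Throughout, every entry of $\mathbf s$ is $\geq1$. Hence any $G$ with $s_i(G)\geq s_i$ for all $i$ satisfies $aG\neq0$ for every $a\neq0$, so $G$ has full row rank and really generates a code of the stated dimension.

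For (1), write $\mathbf s_1\in\mathbb N^{m_1}$ and $\mathbf s_2\in\mathbb N^{m_2}$, and take generator matrices $G_1$ of length $n_1=n_q(\mathbf s_1)$ and $G_2$ of length $n_2=n_q(\mathbf s_2)$ attaining these values. Form the block-diagonal matrix $G=\begin{pmatrix}G_1&0\\0&G_2\end{pmatrix}$ of length $n_1+n_2$. For $a=(a^{(1)},a^{(2)})$ we have $aG=(a^{(1)}G_1,\,a^{(2)}G_2)$. So if $a_i\neq0$ with $i$ in the first block, then $\wt(aG)\geq\wt(a^{(1)}G_1)\geq s_i(G_1)$. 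The second block is symmetric. Hence $s_i(G)$ meets every prescribed value, and $n_q(\mathbf s_1\mid\mathbf s_2)\leq n_1+n_2$.

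For (2), it suffices to prove the case $j=1$, namely $n_q(s_1,\dots,s_m)\geq1+n_q(s_1,\dots,s_{m-1})$; the general case then follows by induction on $j$. Let $G$ be an $m\times n$ matrix with $n=n_q(s_1,\dots,s_m)$ and $s_i(G)\geq s_i$.
\begin{enumerate}
\item Since $s_m\geq1$, the last row $g_m$ is nonzero, so I pick a column $p$ with $G_{m,p}\neq0$.
\item For each $i<m$, replace row $g_i$ by $g_i-c_ig_m$, with $c_i$ chosen so the $(i,p)$ entry becomes $0$. Call the result $G'$.
\item Delete row $m$ and column $p$ of $G'$, obtaining an $(m-1)\times(n-1)$ matrix $G''$.
\end{enumerate}

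The key check is that these row operations do not decrease $s_i$ for $i<m$. We have $aG'=a'G$, where $a'_i=a_i$ for $i<m$ and $a'_m=a_m-\sum_{i<m}c_ia_i$. The map $a\mapsto a'$ is a bijection of $\F_q^m$ preserving the condition $a_i\neq0$ for each $i<m$, so $s_i(G')=s_i(G)$ for $i<m$.

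Now take $b\in\F_q^{m-1}$ with $b_i\neq0$. Then $bG''$ is $(b,0)G'$ with coordinate $p$ removed. That coordinate is $0$, because column $p$ of $G'$ vanishes in the first $m-1$ rows. Therefore $\wt(bG'')=\wt((b,0)G')\geq s_i(G')\geq s_i$. This shows $n_q(s_1,\dots,s_{m-1})\leq n-1$.

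The only point needing care is this invariance of the first $m-1$ separations under the elimination step, which is handled by the change of coefficients $a\mapsto a'$ above. The nonincreasing hypothesis is not actually needed for the inequality; it only makes the dropped coordinates the smallest ones, which is the form in which the bound is used.
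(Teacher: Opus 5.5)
Your proof is correct. It differs from the paper in that the paper gives no argument: it simply records both inequalities as Theorem~4 and Corollary~9 of van Gils and moves on.

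You prove both facts directly from the matrix form $s_i(G)=\min\{\wt(aG):a_i\neq0\}$:
\begin{enumerate}
\item[(1)] by the block-diagonal (direct-sum) construction;
\item[(2)] by reducing to $j=1$ and then shortening on a pivot column of the last row. Your substitution $a\mapsto a'$ cleanly shows that the elimination step leaves $s_1,\dots,s_{m-1}$ unchanged.
\end{enumerate}

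The citation buys brevity and ties the lemma to the classical UEP literature. Your version buys self-containedness, and it makes visible that the nonincreasing hypothesis in (2) is not needed for the inequality itself. Your argument never uses the ordering of $\mathbf s$: pivoting on any row lets you drop any entry. In the paper's application the vector $(d,\dots,d,1,\dots,1)$ is nonincreasing anyway, so nothing changes there.

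Your standing assumption that every entry of $\mathbf s$ is at least $1$ is harmless. Since $n_q$ is defined through $m$-dimensional codes, every admissible $G$ has full row rank, so $s_i(G)\geq1$ automatically. The full rank of $G''$ then follows from $\wt(bG'')=\wt((b,0)G')$ in any case.
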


\begin{theorem}
\label{thm:free-exact}
Let $f\colon V\to\F_q^\ell$ be a surjective linear function. For every $r\geq0$, the following are equivalent:
\begin{enumerate}
\item there exists a linear encoding $ \Enc\colon V\to C\subseteq\F_q^{k+r} $
with $\delta_{\Enc}(f)\geq d$;
\item there exists a linear $[r+\ell,\ell,\geq d]_q$ code.
\end{enumerate}
Consequently,
\[
r_f^{\mathrm{free,lin}}(d)
=
\Nlin(\ell,d)-\ell.
\]
\end{theorem}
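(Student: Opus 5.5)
We will prove the two implications separately, using the separation-vector representation of Theorem~\ref{thm:adapted-basis} together with van Gils' bounds in Lemma~\ref{lem:vangils-bounds}. First fix a basis $\mathcal B=(b_1,\dots,b_k)$ of $V$ adapted to $f$, that is, with $U=\ker f=\operatorname{span}\{b_{\ell+1},\dots,b_k\}$; this is possible since $\dim U=k-\ell$. By Theorem~\ref{thm:adapted-basis}, a linear encoding $\Enc$ of length $n$ satisfies $\delta_{\Enc}(f)\ge d$ if and only if its generator matrix $G=G_{\mathcal B}(\Enc)$, a $k\times n$ matrix of rank $k$, satisfies $s_i(G)\ge d$ for $1\le i\le\ell$. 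Conversely, every rank-$k$ matrix $G\in\F_q^{k\times n}$ defines a linear encoding $v\mapsto[v]_{\mathcal B}G$. Hence condition~(1) is equivalent to the existence of a $k$-dimensional code of length $k+r$ with a generator matrix $G$ satisfying $s_i(G)\ge d$ for $i\le\ell$ and $s_i(G)\ge1$ for $i>\ell$. (The latter condition is automatic, since the rows are independent.) In van Gils' notation, this says exactly that $n_q(d,\dots,d,1,\dots,1)\le k+r$, where $d$ appears $\ell$ times and $1$ appears $k-\ell$ times.

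For (2)$\Rightarrow$(1), take a generator matrix $G_1\in\F_q^{\ell\times(r+\ell)}$ of an $[r+\ell,\ell,\ge d]_q$ code and form the block-diagonal matrix $G=\operatorname{diag}(G_1,I_{k-\ell})$, which has size $k\times(k+r)$ and rank $k$. For $a=(a',a'')$ we have $\wt(aG)=\wt(a'G_1)+\wt(a'')\ge d$ whenever $a'\ne0$. So $s_i(G)\ge d$ for $i\le\ell$, and the encoding $v\mapsto[v]_{\mathcal B}G$ works. This is just Lemma~\ref{lem:vangils-bounds}(1) made explicit, using $n_q(1,\dots,1)=k-\ell$.

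For (1)$\Rightarrow$(2), take $G$ as above with $n=k+r$ and aim to shorten it down to an $\ell$-dimensional code of length $r+\ell$. The cleanest route is Lemma~\ref{lem:vangils-bounds}(2). The vector $\mathbf s=(d,\dots,d,1,\dots,1)$ is nonincreasing with $m=k$, and taking $j=k-\ell$ gives
\[
k+r\ge n_q(\mathbf s)\ge (k-\ell)+n_q(d,\dots,d)=(k-\ell)+\Nlin(\ell,d).
\]
Hence $\Nlin(\ell,d)\le r+\ell$. Padding a shortest code with zero coordinates then produces an $[r+\ell,\ell,\ge d]_q$ code. The edge case $\ell=k$ (so $j=0$) is trivial and is handled directly: then $G$ itself is such a code. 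If we prefer a self-contained argument instead of citing van Gils, we proceed by induction: at each step, pick a column $j$ in which some row $b_i$ with $i>\ell$ has a nonzero entry, and use that row to clear column $j$ in all other rows by row operations. Row operations that add multiples of kernel rows do not change the set $\{aG:(a_1,\dots,a_\ell)\ne0\}$. We then delete that kernel row and column $j$. Every vector $aG$ with $(a_1,\dots,a_\ell)\ne0$ and zero coefficient on the pivot row is unchanged except for the deleted coordinate, which is now $0$, so weights are preserved and $s_i\ge d$ survives. Repeating this $k-\ell$ times yields an $\ell\times(r+\ell)$ matrix with all $s_i\ge d$, and its rows are independent, since a nonzero combination has weight $\ge d\ge1$.

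The formula $r_f^{\mathrm{free,lin}}(d)=\Nlin(\ell,d)-\ell$ then follows by taking the least $r$ in the equivalence. Here we use that a code of length $N\ge\ell$ can be extended by zero coordinates, so existence at length $N$ implies existence at every longer length. The main obstacle is the shortening step in (1)$\Rightarrow$(2): one must check that deleting the pivot column does not reduce the weight of any relevant codeword. The argument above handles this by making the pivot column zero on all remaining rows before deleting it, so only a zero coordinate is removed.
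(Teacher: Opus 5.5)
Your proof is correct and follows the paper's argument. Both use Theorem~\ref{thm:adapted-basis} to reduce condition~(1) to $n_q(d,\dots,d,1,\dots,1)\le k+r$, and then evaluate this quantity as $\Nlin(\ell,d)+(k-\ell)$ with the two parts of Lemma~\ref{lem:vangils-bounds}; your block-diagonal construction, your handling of $\ell=k$ and of zero-padding, and your optional row-reduction shortening argument are unpacked versions of those van Gils bounds that fill in details the paper leaves implicit.
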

\begin{proof}
By Theorem~\ref{thm:adapted-basis} and the correspondence between
linear encodings and generator matrices
(Definition~\ref{def:encodings}), an encoding as in~(1) exists if and
only if some $k\times(k+r)$ generator matrix $G$ satisfies
$s_i(G)\geq d$ for $1\leq i\leq\ell$. Since every injective encoding
has all input separations at least $1$, this holds if and only if
\[
n_q(\underbrace{d,\dots,d}_{\ell},\underbrace{1,\dots,1}_{k-\ell})
\leq k+r.
\]
On the other hand, by Lemma~\ref{lem:vangils-bounds},  we have
\[
n_q(d,\dots,d,1,\dots,1)
=
\Nlin(\ell,d)+(k-\ell).
\]
Hence an encoding as in~(1) exists if and only if
$r+\ell\geq\Nlin(\ell,d)$, which means a linear
$[r+\ell,\ell,\geq d]_q$ code exists.
\end{proof}
 
\begin{remark}
After choosing a basis $\mathcal{B}$ as in Theorem~\ref{thm:adapted-basis}, the free linear separation problem becomes the shortest-length problem of
van Gils~\cite[Definition~4]{vangils1983} for the separation vector
\[
(\underbrace{d,\dots,d}_{\ell},
 \underbrace{1,\dots,1}_{k-\ell}).
\]
In this sense, Theorem~\ref{thm:free-exact} expresses the optimal free linear redundancy through a classical UEP quantity and shows that it depends on
$f$ only through $\rank(f)$. The additional feature studied here is
the requirement of systematicity relative to the distinguished basis.
\end{remark}

We obtain a few corollaries from Theorem~\ref{thm:free-exact}. The first is a  lower bound for the systematic linear redundancy.

\begin{corollary}
\label{cor:systematic-lower}
Let $f\colon V\to\F_q^\ell$ be a surjective linear function. Then
\[
r_f^{\mathrm{sys,lin}}(d)
\geq
\Nlin(\ell,d)-\ell.
\]
\end{corollary}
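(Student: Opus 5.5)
The bound follows by comparing the two optimization problems in Definition~\ref{def:free-systematic-redundancy} and then invoking Theorem~\ref{thm:free-exact}. Every linear encoding that is systematic relative to $\mathcal E$ is in particular a linear encoding. So the feasible set defining $r_f^{\mathrm{sys,lin}}(d)$ is contained in the feasible set defining $r_f^{\mathrm{free,lin}}(d)$, and the minimum over the smaller set is at least the minimum over the larger one. This is the inequality $r_f^{\mathrm{free,lin}}(d)\leq r_f^{\mathrm{sys,lin}}(d)$ already noted in Remark~\ref{rem:optimal-redundancy-relation}.

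Next I substitute the exact value $r_f^{\mathrm{free,lin}}(d)=\Nlin(\ell,d)-\ell$ from Theorem~\ref{thm:free-exact}, which applies because $f$ is surjective.

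For a direct argument without the remark, take any systematic linear $\Enc\colon V\to C\subseteq\F_q^{k+r}$ with $\delta_{\Enc}(f)\geq d$. The implication (1)$\Rightarrow$(2) of Theorem~\ref{thm:free-exact} then gives a linear $[r+\ell,\ell,\geq d]_q$ code. Hence $r+\ell\geq \Nlin(\ell,d)$, and minimizing over $r$ yields the claim.

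One convention needs checking: if no systematic encoding achieves separation $d$, then $r_f^{\mathrm{sys,lin}}(d)=+\infty$ by $\min\emptyset=+\infty$, and the inequality holds trivially. There is no real obstacle; all the work is contained in Theorem~\ref{thm:free-exact}.
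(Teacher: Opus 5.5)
Your proposal is correct and follows the paper's own argument. The paper states this corollary directly after Theorem~\ref{thm:free-exact}; it follows from $r_f^{\mathrm{free,lin}}(d)\le r_f^{\mathrm{sys,lin}}(d)$ (Remark~\ref{rem:optimal-redundancy-relation}) together with the exact value $r_f^{\mathrm{free,lin}}(d)=\Nlin(\ell,d)-\ell$ from that theorem.
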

For $\ell=1$, this lower bound is attained by a systematic linear encoding.
\begin{corollary} 
\label{cor:rank-one-coordinate}
Let $f\colon V\to\F_q$ be a surjective linear function. Then for every
$d\geq1$,
\[
r_f^{\mathrm{sys,lin}}(d)=d-1.
\] 
\end{corollary}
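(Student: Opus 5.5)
The equality follows from a matching lower bound and an explicit systematic construction.

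\emph{Lower bound.} Apply Corollary~\ref{cor:systematic-lower} with $\ell=1$. The shortest linear $[N,1,\geq d]_q$ code has length $\Nlin(1,d)=d$: a one-dimensional code with minimum distance at least $d$ has length at least $d$, and the repetition code of length $d$ attains it. Hence $r_f^{\mathrm{sys,lin}}(d)\geq d-1$.

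\emph{Upper bound.} Since $f$ is a nonzero linear functional on $V$, there is a nonzero $h\in\F_q^k$ with $f(v)=[v]_{\mathcal E}\,h^{\top}$ for all $v$, so $\ker f=\{v: [v]_{\mathcal E}h^{\top}=0\}$. Define the systematic linear encoding
\[
\Enc(v)=\bigl([v]_{\mathcal E},\,f(v),\dots,f(v)\bigr)\in\F_q^{k+d-1},
\]
with $f(v)$ repeated $d-1$ times. Its generator matrix relative to $\mathcal E$ is $[\,I_k\mid h^{\top}\cdots h^{\top}\,]$, which has the required form $[\,I_k\mid P\,]$.

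By Theorem~\ref{thm:kernel-formula}, $\delta_{\Enc}(f)$ is the minimum of $\wt(\Enc(v))$ over $v\notin\ker f$. For such $v$:
\begin{itemize}[label={}]
\item the $d-1$ parity coordinates all equal $f(v)\neq0$;
\item $v\neq0$, so the systematic part $[v]_{\mathcal E}$ is nonzero and contributes at least one further nonzero coordinate.
\end{itemize}
Thus $\wt(\Enc(v))\geq d$, so $\delta_{\Enc}(f)\geq d$ with redundancy $d-1$. Combined with the lower bound, $r_f^{\mathrm{sys,lin}}(d)=d-1$.

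For $d=1$ the construction is just the identity encoding with redundancy $0$, and the argument still applies.
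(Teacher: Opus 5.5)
Your proposal is correct and follows essentially the same route as the paper: you get the lower bound from Corollary~\ref{cor:systematic-lower} with $\Nlin(1,d)=d$, and the upper bound from the systematic encoding that appends $d-1$ copies of $f(v)$. You also spell out a few details the paper leaves implicit, namely the justification of $\Nlin(1,d)=d$, the explicit generator matrix, and the $d=1$ case.
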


\begin{proof}
Since $\Nlin(1,d)=d,$
Corollary~\ref{cor:systematic-lower} gives the lower bound
$r_f^{\mathrm{sys,lin}}(d)\geq d-1$.
For the upper bound, define
\[
\Enc(v)
=
\bigl([v]_{\mathcal E},
\underbrace{f(v),\dots,f(v)}_{d-1\text{ copies}}\bigr).
\]
This encoding is linear and systematic relative to $\mathcal E$. If
$f(u)\neq f(v)$, then $w=u-v$ is nonzero and each of the last $d-1$
coordinates of $\Enc(w)$ is nonzero. 
Hence $ \wt\bigl(\Enc(w)\bigr)\geq 1+(d-1)=d, $
and so $\delta_{\Enc}(f)\geq d$.  
\end{proof}

By Remark~\ref{rem:optimal-redundancy-relation} and
Corollary~\ref{cor:rank-one-coordinate},
$r_f(k,t)\leq r_f^{\mathrm{sys,lin}}(2t+1)=2t$.
On the other hand, in \cite[Corollary 1]{lenz2023}, it is shown that $r_f(k,t)\geq 2t$ for any function $f$ with $| \Imf(f)| \ge2$. 
The statement in \cite{lenz2023} is over $\F_2$, but   the proof can be applied to any field $\F_q$, cf.
\cite{premlal2025}. We obtain the following.
\begin{corollary}
\label{cor:rank-one-fcc-optimal}
Let $f\colon V\to\F_q$ be a surjective linear function. Then
\[
r_f(k,t)=2t,
\]
and the optimal redundancy is attained by a systematic linear  encoding.
\end{corollary}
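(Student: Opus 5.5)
The statement follows by combining an upper bound from the paper with a lower bound from the literature. I would take the two inequalities in turn and then note which encoding attains equality.

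For the upper bound, Remark~\ref{rem:optimal-redundancy-relation} gives $r_f(k,t)\le r_f^{\mathrm{sys,lin}}(2t+1)$. This holds because every systematic linear encoding achieving separation $2t+1$ is, by Proposition~\ref{prop:separation-criterion} and Definition~\ref{def:fcc}, an $(f,t)$-function-correcting code. Corollary~\ref{cor:rank-one-coordinate} with $d=2t+1$ then gives $r_f^{\mathrm{sys,lin}}(2t+1)=2t$, so $r_f(k,t)\le 2t$. The witness is explicit: the repetition-type encoding $\Enc(v)=([v]_{\mathcal E},f(v),\dots,f(v))$ with $2t$ copies of $f(v)$ is systematic, linear, and has $\delta_{\Enc}(f)\ge 2t+1$.

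For the lower bound, I would invoke \cite[Corollary~1]{lenz2023}, which states that $r_f(k,t)\ge 2t$ whenever $|\Imf(f)|\ge 2$. Here $f$ is surjective onto $\F_q$, so $|\Imf(f)|=q\ge 2$ and the hypothesis holds. The only real obstacle is that the cited result is stated over $\F_2$, so I would sketch its $\F_q$ version directly.

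Choose $u,v$ with $f(u)\ne f(v)$ and $d_H([u]_{\mathcal E},[v]_{\mathcal E})=1$. Such a pair exists: if $f(e_i)=0$ for all $i$, then $f$ would be zero, contradicting surjectivity. So take an $i$ with $f(e_i)\ne 0$ and set $u=0$, $v=e_i$. Systematicity then gives
\[
2t+1\le d_H(\Enc(u),\Enc(v))\le 1+d_H(p(u),p(v))\le 1+r,
\]
hence $r\ge 2t$. This argument is field-independent, consistent with the remark citing \cite{premlal2025}.

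Combining the two bounds gives $r_f(k,t)=2t$. Since the witness encoding from Corollary~\ref{cor:rank-one-coordinate} has redundancy exactly $2t$ and is systematic and linear, the optimum is attained by a systematic linear encoding.
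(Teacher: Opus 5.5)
Your proof is correct and follows the paper's argument. The upper bound comes from Remark~\ref{rem:optimal-redundancy-relation} together with Corollary~\ref{cor:rank-one-coordinate} at $d=2t+1$, and the lower bound from \cite[Corollary~1]{lenz2023}; your explicit $\F_q$ argument (comparing $\Enc(0)$ and $\Enc(e_i)$ with $f(e_i)\neq0$) is a good self-contained replacement for the paper's remark that the binary proof carries over.
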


Combining Theorem~\ref{thm:free-exact} with the Singleton and
Griesmer bounds yields the following.
\begin{corollary} 
\label{cor:free-bounds}
Let $\Enc\colon V\to C\subseteq\F_q^n$ be a linear encoding, and let
$f\colon V\to\F_q^\ell$ be a surjective linear function. If
$\delta_{\Enc}(f)\geq d$, then the redundancy $r$ of $\Enc$
satisfies
\[
r\geq d-1
\qquad\text{and}\qquad
r+\ell
\geq
\sum_{i=0}^{\ell-1}
\left\lceil\frac{d}{q^i}\right\rceil.
\]
\end{corollary}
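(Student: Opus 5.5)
The plan is to reduce Corollary~\ref{cor:free-bounds} to the existence of a short classical code via Theorem~\ref{thm:free-exact}, and then apply the Singleton and Griesmer bounds to that code. Suppose $\Enc\colon V\to C\subseteq\F_q^{k+r}$ is linear with $\delta_{\Enc}(f)\geq d$. Then item~(1) of Theorem~\ref{thm:free-exact} holds for this value of $r$. The implication (1)$\Rightarrow$(2) therefore supplies a linear $[r+\ell,\ell,\geq d]_q$ code $C'$. Equivalently, $r+\ell\geq\Nlin(\ell,d)$. From this point on, the encoding $\Enc$ plays no further role; only the parameters of $C'$ matter.

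For the first inequality, I apply the Singleton bound to $C'$. Its length $N=r+\ell$, dimension $\ell$ and minimum distance $d'\geq d$ satisfy $d'\leq N-\ell+1$. This gives $d\leq r+1$, that is, $r\geq d-1$.

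For the second inequality, I apply the Griesmer bound to $C'$. It gives
\[
r+\ell\geq\sum_{i=0}^{\ell-1}\left\lceil\frac{d'}{q^i}\right\rceil .
\]
Each summand is monotone in $d'$, and $d'\geq d$, so the right-hand side is at least $\sum_{i=0}^{\ell-1}\lceil d/q^i\rceil$. That is the claimed bound.

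The only point needing care is the degenerate case in which the code from Theorem~\ref{thm:free-exact} has minimum distance strictly larger than $d$. This is handled by the monotonicity just noted, so I expect no real obstacle. Everything substantive was already done in Theorem~\ref{thm:free-exact}, which rested on van Gils' bounds in Lemma~\ref{lem:vangils-bounds}.
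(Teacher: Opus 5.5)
Your proposal is correct and follows the same route as the paper: Theorem~\ref{thm:free-exact} turns the encoding into a linear $[r+\ell,\ell,\geq d]_q$ code, and the Singleton and Griesmer bounds are then applied to that code. Your observation that both bounds are monotone in the actual minimum distance $d'\geq d$ is the only extra detail needed, and you handle it correctly.
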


We have seen in Corollary~\ref{cor:kernel-invariance} that
functions with the same kernel have the same function-separation
distance under every encoding, and in
Theorem~\ref{thm:free-exact} that the free linear redundancy
$r_f^{\mathrm{free,lin}}(d)$ is the same for all surjective linear
functions of the same rank. 
We now consider when two linear
functions have the same systematic redundancy. 
For a linear function $f\colon V\to\F_q^\ell$, let
$M_f\in\F_q^{k\times\ell}$ denote the matrix with rows
$f(e_1),\dots,f(e_k)$, so that $f(v)=[v]_{\mathcal E}M_f$.
We say that two linear functions  $f,g: V \rightarrow \F_q^\ell$ are
\emph{monomially equivalent} if $M_g=AM_f$ for a monomial matrix
$A\in\F_q^{k\times k}$. Equivalently,    there is a linear automorphism $M\in\mathrm{GL}(V)$ whose matrix relative to
$\mathcal E$ is monomial such that $ g=f\circ M.$ 

The next theorem shows that monomially equivalent functions have the same systematic redundancy at every separation.

\begin{theorem}
\label{thm:monomial-invariance}
Let $f,g\colon V\to\F_q^\ell$ be monomially equivalent surjective
linear functions. Then  for every $d\geq1$,
\[
r_f^{\mathrm{sys,lin}}(d)=r_g^{\mathrm{sys,lin}}(d).
\]
\end{theorem}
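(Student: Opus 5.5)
The idea is to transport a systematic encoding for $f$ into one for $g$ by a coordinate permutation and scaling of the code. This does not change Hamming weights, and it preserves systematic form up to a monomial change of the redundancy block. By symmetry (the inverse of a monomial matrix is monomial, and $M_f=A^{-1}M_g$), it suffices to show $r_g^{\mathrm{sys,lin}}(d)\le r_f^{\mathrm{sys,lin}}(d)$.

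Let $M_g=AM_f$ with $A$ monomial. Let $\Enc$ be a systematic linear encoding with $G_{\mathcal E}(\Enc)=[\,I_k\mid P\,]$, of redundancy $r$ and with $\delta_{\Enc}(f)\ge d$. Define a new encoding $\Enc'$ by
\[
G_{\mathcal E}(\Enc')=A\,[\,I_k\mid P\,]\begin{pmatrix}A^{-1}&0\\0&I_r\end{pmatrix}=[\,I_k\mid AP\,],
\]
which is systematic and linear of the same redundancy. In coordinates $x=[v]_{\mathcal E}$ we have $\Enc'(v)=(xAG)\,\mathrm{diag}(A^{-1},I_r)$, where $G=G_{\mathcal E}(\Enc)$. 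The right factor $\mathrm{diag}(A^{-1},I_r)$ is a monomial $(k+r)\times(k+r)$ matrix, so it preserves Hamming weight. Hence $\wt(\Enc'(v))=\wt(\Enc(v''))$, where $[v'']_{\mathcal E}=xA$.

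Now compare kernels. We have $g(v)=xM_g=xAM_f=f(v'')$. So $v\notin\ker g$ if and only if $v''\notin\ker f$, and $v\mapsto v''$ is a bijection of $V$. By Theorem~\ref{thm:kernel-formula}, written in the form $\delta_{\Enc}(f)=\min_{w\notin\ker f}\wt(\Enc(w))$ as in its proof, we get
\[
\delta_{\Enc'}(g)=\min_{v\notin\ker g}\wt(\Enc'(v))=\min_{v''\notin\ker f}\wt(\Enc(v''))=\delta_{\Enc}(f)\ge d.
\]
This gives the inequality, and exchanging the roles of $f$ and $g$ gives equality.

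The only point needing care is the bookkeeping of the row-vector convention: one must check that $A[\,I\mid P\,]\mathrm{diag}(A^{-1},I)$ really yields the identity block, and that the right-multiplying factor is monomial and hence weight-preserving. Both are immediate, since a monomial matrix times a vector only permutes and rescales entries by nonzero scalars. Surjectivity of $f$ and $g$ plays no role beyond ensuring that Theorem~\ref{thm:kernel-formula} applies to nonzero functions.
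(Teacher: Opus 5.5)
Your proof is correct and is essentially the paper's argument written in matrix form. Your encoding with generator matrix $[\,I_k\mid AP\,]$ is exactly the paper's $\widetilde\Enc(v)=([v]_{\mathcal E},p(M(v)))$. Your two remaining steps also match the paper: the weight identity $\wt(\Enc'(v))=\wt(\Enc(M(v)))$, which comes from $A$ being monomial, and the bijection $v\mapsto M(v)$ from $\{g\neq0\}$ onto $\{f\neq0\}$.
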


\begin{proof}
Write $g=f\circ M$, where $M\in\mathrm{GL}(V)$ has monomial matrix
$A$ relative to $\mathcal E$, so that
$[M(v)]_{\mathcal E}=[v]_{\mathcal E}A$. Let
$\Enc(v)=\bigl([v]_{\mathcal E},p(v)\bigr)$ be a linear encoding
systematic relative to $\mathcal E$, with redundancy $r$ and
$\delta_{\Enc}(f)\geq d$. Define
\[
\widetilde\Enc(v)
:=
\bigl([v]_{\mathcal E},p(M(v))\bigr).
\]
Since $p\circ M$ is linear, $\widetilde\Enc$ is a linear encoding,
systematic relative to $\mathcal E$, with redundancy $r$.
For every $w\in V$, we have
\[
\begin{aligned}
\wt\bigl(\widetilde\Enc(w)\bigr)
&=
\wt\bigl([w]_{\mathcal E}\bigr)+\wt\bigl(p(M(w))\bigr)\\
&=
\wt\bigl([M(w)]_{\mathcal E}\bigr)+\wt\bigl(p(M(w))\bigr)
=
\wt\bigl(\Enc(M(w))\bigr).
\end{aligned}
\]
Moreover, 
$w\mapsto M(w)$ is a bijection of $\{w:g(w)\neq0\}$ onto
$\{w:f(w)\neq0\}$. Hence
\[
\delta_{\widetilde\Enc}(g)
=
\min_{g(w)\neq0}\wt\bigl(\widetilde\Enc(w)\bigr)
=
\min_{f(w')\neq0}\wt\bigl(\Enc(w')\bigr)
=
\delta_{\Enc}(f)
\geq d.
\]
This implies that 
$r_g^{\mathrm{sys,lin}}(d)\leq r_f^{\mathrm{sys,lin}}(d)$.  
The  same argument applied to $f=g\circ M^{-1}$ gives the
reverse inequality.
\end{proof}

By Theorem~\ref{thm:adapted-basis}, every surjective linear function $f$ admits a separation-vector representation of
$\delta_{\Enc}(f)$ with respect to a suitable basis $\mathcal{B}$, and
Remark~\ref{rem:basis-change-systematicity} showed that an encoding
systematic relative to $\mathcal E$ is  not
systematic relative to such a basis. 
In the next theorem, we consider functions $f$
admitting a separation-vector representation with respect to a reordering of $\mathcal E$ itself. 
These functions are simply the ones whose kernel is spanned by distinguished basis vectors. 
For such functions, we will show that  $r_f^{\mathrm{sys,lin}}(d) = r_f^{\mathrm{free,lin}}(d)$ at any separation $d$.
\begin{theorem}
\label{thm:coordinate-kernel-exact}
Let $f\colon V\to\F_q^\ell$ be a surjective linear function. Assume that $\ker f = \operatorname{span}\{e_i:i\in I\}$ for some
$I\subseteq\{1,\dots,k\}$ with $|I|=k-\ell$.
Then
\[ 
 r_f^{\mathrm{sys,lin}}(d)
 =
 r_f^{\mathrm{free,lin}}(d)
 =
 \Nlin(\ell,d)-\ell. 
\]

\end{theorem}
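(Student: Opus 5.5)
The plan is to prove the two inequalities separately. The lower bound $r_f^{\mathrm{sys,lin}}(d)\geq \Nlin(\ell,d)-\ell$ is already Corollary~\ref{cor:systematic-lower}. The equality $r_f^{\mathrm{free,lin}}(d)=\Nlin(\ell,d)-\ell$ is Theorem~\ref{thm:free-exact}. So the only remaining task is to exhibit a systematic linear encoding with redundancy $\Nlin(\ell,d)-\ell$ and $\delta_{\Enc}(f)\geq d$.

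Let $J:=\{1,\dots,k\}\setminus I$, so $|J|=\ell$. Since $\ker f=\operatorname{span}\{e_i:i\in I\}$, the coordinate projection $\pi_J^{\mathcal E}\colon V\to\F_q^\ell$ has the same kernel as $f$. By Corollary~\ref{cor:kernel-invariance}, $\delta_{\Enc}(f)=\delta_{\Enc}(\pi_J^{\mathcal E})$ for every encoding, so we may replace $f$ by $\pi_J^{\mathcal E}$. (If one prefers to reorder the basis so that $J=\{1,\dots,\ell\}$, Theorem~\ref{thm:monomial-invariance} with a permutation matrix justifies this, but it is not needed.)

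Take $N:=\Nlin(\ell,d)$ and a linear $[N,\ell,\geq d]_q$ code with generator matrix $G_0$. Since any $\ell$-dimensional code has an information set, after permuting its coordinates we may assume $G_0=[\,I_\ell\mid Q\,]$ with $Q\in\F_q^{\ell\times(N-\ell)}$. Permuting coordinates does not change the minimum distance. Define
\[
\Enc(v):=\bigl([v]_{\mathcal E},\,[v]_{\mathcal E}|_J\,Q\bigr),
\]
which is linear, systematic relative to $\mathcal E$, and has redundancy $N-\ell$.

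Now let $w\in V$ with $f(w)\neq0$, and put $a:=[w]_{\mathcal E}|_J$, which is nonzero. The coordinates of $\Enc(w)$ indexed by $J$, together with the redundancy part, form exactly the codeword $aG_0=(a,aQ)$. This codeword is nonzero, so its weight is at least $d$. Hence $\wt(\Enc(w))\geq\wt(aG_0)\geq d$. By Theorem~\ref{thm:kernel-formula}, $\delta_{\Enc}(f)\geq d$, which gives the upper bound.

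There is no serious obstacle. The only point needing care is the systematic form of the optimal $[N,\ell,\geq d]$ code, and this is settled by the standard observation that an information set exists and that coordinate permutations preserve weights.
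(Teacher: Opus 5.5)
Your proof is correct and follows essentially the same route as the paper: the lower bound from Corollary~\ref{cor:systematic-lower} and Theorem~\ref{thm:free-exact}, and the upper bound by appending $[w]_{\mathcal E}|_J\,Q$, where $[\,I_\ell\mid Q\,]$ is a systematic generator matrix of an optimal $[N,\ell,\geq d]_q$ code. The only cosmetic difference is that the paper first reorders $\mathcal E$ via Theorem~\ref{thm:monomial-invariance} so that $J$ is the last $\ell$ coordinates, whereas you index by $J$ directly, which is equally valid.
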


\begin{proof}
For convenience, let $N=\Nlin(\ell,d)$. By Theorem~\ref{thm:free-exact} and
Corollary~\ref{cor:systematic-lower},
\[
r_f^{\mathrm{sys,lin}}(d)\geq r_f^{\mathrm{free,lin}}(d)=N-\ell.
\]
Let $J=I^c$. 
By Theorem \ref{thm:monomial-invariance}, we may replace $f$ by a monomially equivalent function and assume
$I=\{1,\dots,k-\ell\}$ and $J=\{k-\ell+1,\dots,k\}$. After a permutation of transmitted coordinates, a
linear $[N,\ell,\geq d]_q$ code has a systematic generator matrix
$[\,I_\ell\mid Q\,]$ with $Q\in\F_q^{\ell\times(N-\ell)}$. Define
the linear encoding $\Enc\colon V\to\F_q^{k+N-\ell}$ by the
generator matrix
\[
G_{\mathcal E}(\Enc)
=
\left[\,I_k\;\middle|\;
\begin{matrix}0\\ Q\end{matrix}\,\right].
\]
For $w\in V$, write $[w]_{\mathcal E}=(x,z)$ with
$x\in\F_q^{k-\ell}$ and $z\in\F_q^{\ell}$, so that
$\Enc(w)=(x,z,zQ)$. If $\pi_J^{\mathcal E}(w)=z\neq0$, then
$(z,zQ)$ is a nonzero codeword of the $[N,\ell,\geq d]_q$ code
generated by $[\,I_\ell\mid Q\,]$, so
\[
\wt\bigl(\Enc(w)\bigr)
=
\wt(x)+\wt(z,zQ)
\geq d.
\]
Hence $\delta_{\Enc}(\pi_J^{\mathcal E})\geq d$. Since $f$ and
$\pi_J^{\mathcal E}$ are surjective with the same kernel,
Corollary~\ref{cor:kernel-invariance} gives
$\delta_{\Enc}(f)\geq d$, and therefore
$r_f^{\mathrm{sys,lin}}(d)\leq N-\ell$.
\end{proof}

\section{The linear separation problem at small separations}
\label{sec:distance-three}

We now determine the free and systematic linear redundancies for the first three prescribed separations $d=1,2,3$. For $d=1$, the answer is immediate: every encoding
has $\delta_{\Enc}(f)\geq1$, and so $r_f^{\mathrm{sys,lin}}(1)=r_f^{\mathrm{free,lin}}(1)=0$.
For $d=2$, we have the following. 
\begin{proposition}
\label{prop:small-separations}
Let $f\colon V\to\F_q^\ell$ be a surjective linear function. Then
\[
r_f^{\mathrm{sys,lin}}(2)=r_f^{\mathrm{free,lin}}(2)=1.
\]
\end{proposition}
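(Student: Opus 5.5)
We prove the lower bound for the free problem, an upper bound for the systematic problem, and close the chain using $r_f^{\mathrm{free,lin}}(2)\le r_f^{\mathrm{sys,lin}}(2)$ from Remark~\ref{rem:optimal-redundancy-relation}.

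For the lower bound, we apply Theorem~\ref{thm:free-exact}, which gives $r_f^{\mathrm{free,lin}}(2)=\Nlin(\ell,2)-\ell$. We then show $\Nlin(\ell,2)=\ell+1$. A linear $[\ell,\ell]_q$ code is all of $\F_q^\ell$, so it contains weight-one vectors; hence $\Nlin(\ell,2)\ge \ell+1$. In the other direction, the parity-check code $\{(z,-\sum_i z_i):z\in\F_q^\ell\}$ is an $[\ell+1,\ell,2]_q$ code. Alternatively, the Singleton bound in Corollary~\ref{cor:free-bounds} already gives $r\ge d-1=1$ directly.

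For the upper bound, we use the overall parity encoding
\[
\Enc(v)=\Bigl([v]_{\mathcal E},\ \textstyle\sum_{i=1}^k([v]_{\mathcal E})_i\Bigr),
\]
with the sum taken in $\F_q$. This encoding is linear and systematic relative to $\mathcal E$, and its redundancy is $1$.

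It remains to check that $\delta_{\Enc}(f)\ge2$. By Theorem~\ref{thm:kernel-formula}, it suffices to show that every $w\notin\ker f$ satisfies $\wt(\Enc(w))\ge2$. Such a $w$ is nonzero, so $x=[w]_{\mathcal E}\ne0$. If $\wt(x)\ge2$ we are done. If $\wt(x)=1$, then the sum of the entries of $x$ equals its single nonzero entry, so the parity symbol is nonzero and $\wt(\Enc(w))=2$.

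In fact this encoding has minimum distance $2$ on all of $V$, which makes it independent of $f$. This gives $r_f^{\mathrm{sys,lin}}(2)\le1$, and combined with the lower bound both quantities equal $1$. I see no real obstacle; the only care needed is to use the sum rather than an $f$-dependent check symbol, so that the weight-one case is covered.
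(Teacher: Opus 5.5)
Your proof is correct and follows essentially the same route as the paper: the lower bound comes from Theorem~\ref{thm:free-exact} together with the Singleton bound, and the upper bound from a systematic encoding with a single parity symbol. The paper's parity symbol sums only the coordinates indexed by $\operatorname{supp}(f)$, and this also works because a weight-one $w=ce_i$ with $f(w)\neq0$ forces $i\in\operatorname{supp}(f)$, so your closing caution against an $f$-dependent check symbol is unnecessary; your overall parity check is equally valid and has the mild advantage of being independent of $f$.
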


\begin{proof}
By the Singleton bound, we have $\Nlin(\ell,2)\ge\ell+1,$
and by Theorem  \ref{thm:free-exact}, it follows that $r_f^{\mathrm{free,lin}}(2)\geq1$. Conversely,
define the parity map $p\colon V\to\F_q$ by
\[
p(v):=\sum_{\substack{i\\ f(e_i)\neq0}}\bigl([v]_{\mathcal E}\bigr)_i,
\]
and let $\Enc\colon V\to\F_q^{k+1}$ be the linear encoding,
systematic relative to $\mathcal E$, given by
$\Enc(v):=\bigl([v]_{\mathcal E},p(v)\bigr)$. If $f(w)\neq0$ and
$\wt([w]_{\mathcal E})\geq2$, then $\wt(\Enc(w))\geq2$. If
$\wt([w]_{\mathcal E})=1$, then $w=ce_i$ with $c\neq0$ and
$f(e_i)\neq0$, so $p(w)=c\neq0$ and $\wt(\Enc(w))=2$. Hence
$\delta_{\Enc}(f)\geq2$, and
$1\leq r_f^{\mathrm{free,lin}}(2)\leq
r_f^{\mathrm{sys,lin}}(2)\leq1$.
\end{proof}

Thus the systematicity gap is zero  at prescribed separations
$d=1$ and $d=2$. In the remainder of this section we will settle the next case $d=3$.

For a nonzero vector $x$ in a finite-dimensional $\F_q$-vector
space, let $\langle x\rangle$ denote the projective point spanned by
$x$. In $\mathrm{PG}(r-1,q)$, we call the points spanned by the
standard basis vectors of $\F_q^r$ the \emph{coordinate points},
and every other point a \emph{non-coordinate point}. There are
\[
\Lambda_q(r) := \frac{q^r-1}{q-1}-r
\]
non-coordinate points. 
Equivalently, $\Lambda_q(r)$ is the number of
points of $\mathrm{PG}(r-1,q)$ whose spanning vectors have
Hamming weight at least two.

For a linear function $f\colon V\to\F_q^\ell$, let
$f_i:=f(e_i)\in\F_q^\ell$ for $1\leq i\leq k$, and let
\[
\operatorname{supp}(f):=\{i\in\{1,\dots,k\}: f_i\neq0\}.
\]

\begin{definition}
\label{def:projective-coordinate-number}
Let $f\colon V\to\F_q^\ell$ be a surjective linear function. Define
\[
\Phi_{\mathcal E}(f)
:=
\{\langle f_i\rangle : i\in\operatorname{supp}(f)\}
\subseteq\mathrm{PG}(\ell-1,q),
\qquad
\nu_{\mathcal E}(f):=|\Phi_{\mathcal E}(f)|.
\]
\end{definition}

\begin{remark}
\label{rem:projective-coordinate-invariance}
We note that $ \operatorname{supp}(f)=\{i:e_i\notin\ker f\}$,
and  for $i,j\in\operatorname{supp}(f)$,
$$
\langle f_i\rangle=\langle f_j\rangle
\quad\Longleftrightarrow\quad
e_i-\lambda e_j\in\ker f
\quad\text{for some }\lambda\in\F_q^\times.
$$
Consequently, $\nu_{\mathcal E}(f)$ is determined by $\ker f$.
Moreover, if $f$ and $g$ are monomially equivalent, then the rows of
$M_g$ are obtained from those of $M_f$ by permutation and nonzero
scaling, and hence
$\nu_{\mathcal E}(g)=\nu_{\mathcal E}(f).$
We have the  relations
\[
\ker f=\ker g
\quad\Longrightarrow\quad
\nu_{\mathcal E}(f)=\nu_{\mathcal E}(g),
\]
\[
\text{$f$ and $g$ are monomially equivalent}
\quad\Longrightarrow\quad
\nu_{\mathcal E}(f)=\nu_{\mathcal E}(g),
\]
but neither converse is true. For example, over $\F_2$, the functions $f$ and $g$ represented by
\[
M_f=
\begin{pmatrix}
1&0\\1&0\\1&0\\0&1
\end{pmatrix},
\qquad
M_g=
\begin{pmatrix}
1&0\\1&0\\0&1\\0&1
\end{pmatrix}
\]
 have $\nu_{\mathcal E}(f)=\nu_{\mathcal E}(g)=2$, but have different kernels and are not monomially equivalent.
\end{remark}

\begin{proposition}
\label{prop:free-distance-three}
Let $f\colon V\to\F_q^\ell$ be a surjective linear function. Then
\[
r_f^{\mathrm{free,lin}}(3)
=
\min\{r\geq1:\ell\leq\Lambda_q(r)\}.
\]
\end{proposition}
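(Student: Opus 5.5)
By Theorem~\ref{thm:free-exact}, $r_f^{\mathrm{free,lin}}(3)=\Nlin(\ell,3)-\ell$. So it suffices to show that a linear $[\ell+r,\ell,\geq3]_q$ code exists if and only if $r\geq1$ and $\ell\leq\Lambda_q(r)$. Once this equivalence is proved, the least such $r$ is exactly the stated minimum; the minimum is finite because $\Lambda_q(r)\to\infty$. The plan is to prove the equivalence in systematic form through parity-check matrices, the usual Hamming-code argument.

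For the direction ``code exists $\Rightarrow$ condition'', take an $[\ell+r,\ell,\geq3]_q$ code $C$. The Singleton bound gives $r\geq2$, so in particular $r\geq1$. After permuting coordinates, $C$ has a systematic generator matrix $[\,I_\ell\mid Q\,]$ with $Q\in\F_q^{\ell\times r}$, as in the proof of Theorem~\ref{thm:coordinate-kernel-exact}. The codeword $x(I_\ell\mid Q)=(x,xQ)$ has weight $\wt(x)+\wt(xQ)$.
\begin{enumerate}
\item Taking $x=e_i$, minimum distance $3$ forces the $i$th row $Q_i$ to satisfy $\wt(Q_i)\geq2$. Hence $Q_i\neq0$ and $\langle Q_i\rangle$ is a non-coordinate point of $\mathrm{PG}(r-1,q)$.
\item Taking $x=e_i-\lambda e_j$ with $i\neq j$ and $\lambda\in\F_q^\times$, we need $\wt(Q_i-\lambda Q_j)\geq1$. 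Hence no two rows of $Q$ span the same projective point.
\end{enumerate}
So the rows of $Q$ give $\ell$ distinct non-coordinate points, and $\ell\leq\Lambda_q(r)$.

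For the converse, assume $r\geq1$ and $\ell\leq\Lambda_q(r)$. Choose $\ell$ distinct non-coordinate points and spanning vectors $Q_1,\dots,Q_\ell$ for them, and let $Q$ have these rows. For nonzero $x$ we check $\wt(x)+\wt(xQ)\geq3$ by cases on $\wt(x)$.
\begin{enumerate}
\item If $\wt(x)\geq3$, there is nothing to show.
\item If $\wt(x)=2$, then $xQ$ is a combination $aQ_i+bQ_j$ with $a,b\neq0$, which is nonzero because the two points are distinct. So $\wt(xQ)\geq1$.
\item If $\wt(x)=1$, then $xQ$ is a nonzero multiple of some $Q_i$, which has weight $\geq2$.
\end{enumerate}
Thus $[\,I_\ell\mid Q\,]$ generates an $[\ell+r,\ell,\geq3]_q$ code.

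This is the same criterion as saying that the parity-check matrix $[-Q^{\top}\mid I_r]$ has pairwise independent columns, and I could phrase the argument that way instead. The case $r=1$ is consistent with the statement: $\Lambda_q(1)=0<\ell$, so $r=1$ never qualifies.

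The only step needing care is the first direction, which must use a systematic form after a coordinate permutation. That is legitimate because Hamming weight is invariant under coordinate permutations. I do not expect a real obstacle here; the whole content is the translation of Theorem~\ref{thm:free-exact} into the projective counting of the rows of $Q$.
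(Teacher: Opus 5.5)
Your proof is correct and is essentially the paper's argument. The paper works on the parity-check side: the columns of a parity-check matrix of $C$ are $r+\ell$ distinct points of $\mathrm{PG}(r-1,q)$, and for the converse it takes the $r$ coordinate points plus $\ell$ non-coordinate points from the Hamming code, which is exactly your $[-Q^{\top}\mid I_r]$ reformulation. You instead work with the rows of a systematic $Q$, as in Lemma~\ref{lem:distance-three-rows} and Theorem~\ref{thm:systematic-distance-three}; this costs a coordinate permutation, which the paper's forward direction avoids, but it makes clear that the free case is the systematic criterion for the identity map on $\F_q^\ell$, where $\nu=\ell$.
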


\begin{proof}
By Theorem~\ref{thm:free-exact}, it is sufficient to determine when a linear $[r+\ell,\ell,\geq3]_q$ code $C$ exists. If such $C$ exists, then $C^\perp$ is a projective code where the columns of its generator matrix form a set of distinct points in $\mathrm{PG}(r-1,q)$ (see for example \cite[p.84]{ding2018}). 
This means
$r+\ell\leq\frac{q^r-1}{q-1}$, and so $\ell\leq\Lambda_q(r)$.

Conversely, assume that $\ell\leq\Lambda_q(r)$.
Consider the Hamming code $\mathcal H_{q,r}$ (see~\cite[p.81]{ding2018} or~\cite[p. 29]{huffman2003}), whose
parity-check matrix has as columns representatives of all $\frac{q^r-1}{q-1}=r+\Lambda_q(r)$ points of $\mathrm{PG}(r-1,q)$.
Let $H'$ be the $r\times(r+\ell)$ submatrix whose columns represent
the $r$ coordinate points together with any $\ell$ non-coordinate
points, and let $C'$ be the code with parity-check matrix $H'$.
Then $C'$ is a linear $[r+\ell,\ell,\geq3]_q$ code.
\end{proof}

\begin{lemma}
\label{lem:distance-three-rows}
Let $\Enc\colon V\to C\subseteq\F_q^{k+r}$ be a linear encoding,
systematic relative to $\mathcal E$, with
$G_{\mathcal E}(\Enc)=[\,I_k\mid P\,]$, and let
$p_1,\dots,p_k\in\F_q^r$ denote the rows of $P$. Let
$f\colon V\to\F_q^\ell$ be a linear function. If
$\delta_{\Enc}(f)\geq3$, then for all
$i,j\in\operatorname{supp}(f)$:
\begin{enumerate}
\item $\wt(p_i)\geq2$. In particular, $\langle p_i\rangle$ is a
non-coordinate point of $\mathrm{PG}(r-1,q)$.
\item If $\langle f_i\rangle\neq\langle f_j\rangle$, then
$\langle p_i\rangle\neq\langle p_j\rangle$.
\end{enumerate}
\end{lemma}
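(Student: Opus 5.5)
Both parts follow from evaluating the weight of the encoding on a few low-weight messages. By Theorem~\ref{thm:kernel-formula}, $\delta_{\Enc}(f)\geq3$ means $\wt(\Enc(w))\geq3$ for every $w\in V$ with $f(w)\neq0$. For a systematic linear encoding with $G_{\mathcal E}(\Enc)=[\,I_k\mid P\,]$, a message with $[w]_{\mathcal E}=x$ has $\Enc(w)=(x,xP)$. Hence
\[
\wt(\Enc(w))=\wt(x)+\wt\Bigl(\sum_i x_ip_i\Bigr).
\]
I will apply this identity to messages of $\mathcal E$-weight one and two.

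For part~1, take $i\in\operatorname{supp}(f)$ and $w=e_i$. Then $f(w)=f_i\neq0$, so $3\leq\wt(\Enc(e_i))=1+\wt(p_i)$, which gives $\wt(p_i)\geq2$. In particular $p_i\neq0$, so $\langle p_i\rangle$ is a point of $\mathrm{PG}(r-1,q)$. Every nonzero multiple of a standard basis vector has weight one, so a point spanned by a vector of weight at least two is a non-coordinate point.

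For part~2, argue by contraposition. Suppose $i,j\in\operatorname{supp}(f)$ with $\langle p_i\rangle=\langle p_j\rangle$ but $\langle f_i\rangle\neq\langle f_j\rangle$. The second condition forces $i\neq j$, and part~1 gives $p_i,p_j\neq0$. So $p_i=\lambda p_j$ for some $\lambda\in\F_q^\times$. Take $w=e_i-\lambda e_j$. Its redundancy part is $p_i-\lambda p_j=0$, so $\wt(\Enc(w))=2$. On the other hand, $f(w)=f_i-\lambda f_j\neq0$, since otherwise $f_i=\lambda f_j$ would give $\langle f_i\rangle=\langle f_j\rangle$. This contradicts $\delta_{\Enc}(f)\geq3$. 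The same fact is visible from Remark~\ref{rem:projective-coordinate-invariance}: $\langle f_i\rangle\neq\langle f_j\rangle$ means no vector $e_i-\lambda e_j$ lies in $\ker f$.

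There is no real obstacle here. The only points needing care are that $i\neq j$ in part~2 and that $p_j\neq0$, which part~1 supplies, so that the projective points make sense.
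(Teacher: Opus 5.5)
Your proof is correct and follows essentially the same route as the paper. Part~1 tests $w=e_i$; part~2 tests a weight-two message supported on $\{i,j\}$. You phrase part~2 contrapositively with the specific message $e_i-\lambda e_j$, whereas the paper shows $\alpha p_i+\beta p_j\neq0$ for all nonzero $\alpha,\beta$.
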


\begin{proof}
Since $\delta_{\Enc}(f)\geq3$, whenever $f(w)\neq0$, we have
\[
\wt\bigl([w]_{\mathcal E}\bigr)+\wt\bigl([w]_{\mathcal E}P\bigr)
\geq3.
\]
For $w=e_i$, we have $[w]_{\mathcal E}P=p_i$ and
$f(w)=f_i\neq0$, so $1+\wt(p_i)\geq3$, which implies $\wt(p_i)\geq2$.

For  $w=\alpha e_i+\beta e_j$ with $\alpha,\beta\neq0$, we have
$[w]_{\mathcal E}P=\alpha p_i+\beta p_j$. If $\langle f_i\rangle\neq\langle f_j\rangle$, then $f_i$ and $f_j$ are linearly
independent, and  $f(w)=\alpha f_i+\beta f_j\neq0$. 
Then 
\[
\wt\bigl([w]_{\mathcal E}P\bigr) \ge 
3-\wt\bigl([w]_{\mathcal E}\bigr) = 
3-2=1, 
\]
so that $\alpha p_i+\beta p_j\neq0$. Hence the rows $p_i$
and $p_j$ are linearly independent and
$\langle p_i\rangle\neq\langle p_j\rangle$.
\end{proof}

\begin{theorem}
\label{thm:systematic-distance-three}
Let $f\colon V\to\F_q^\ell$ be a surjective linear function. For
every $r\geq1$, the following are equivalent:
\begin{enumerate}
\item There exists a linear encoding
$\Enc\colon V\to C\subseteq\F_q^{k+r}$, systematic relative to
$\mathcal E$, with $\delta_{\Enc}(f)\geq3$.
\item $\nu_{\mathcal E}(f)\leq\Lambda_q(r)$.
\end{enumerate}
Consequently,
\[
r_{f,\mathcal E}^{\mathrm{sys,lin}}(3)
=
\min\{r\geq1:\nu_{\mathcal E}(f)\leq\Lambda_q(r)\}.
\]
\end{theorem}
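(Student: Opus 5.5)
The plan is to prove (1)$\Rightarrow$(2) with Lemma~\ref{lem:distance-three-rows} and (2)$\Rightarrow$(1) by an explicit choice of the rows of $P$ that is constant on projective classes of the $f_i$.

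\emph{(1)$\Rightarrow$(2).} Let $\Enc$ be as in~(1) with $G_{\mathcal E}(\Enc)=[\,I_k\mid P\,]$ and rows $p_1,\dots,p_k$. By Lemma~\ref{lem:distance-three-rows}(1), each $\langle p_i\rangle$ with $i\in\operatorname{supp}(f)$ is a non-coordinate point of $\mathrm{PG}(r-1,q)$. By part~(2), the assignment $\langle f_i\rangle\mapsto\langle p_i\rangle$ separates distinct points of $\Phi_{\mathcal E}(f)$. Choosing one index $i$ for each point of $\Phi_{\mathcal E}(f)$ therefore gives an injection of $\Phi_{\mathcal E}(f)$ into the set of non-coordinate points. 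Hence $\nu_{\mathcal E}(f)\le\Lambda_q(r)$.

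\emph{(2)$\Rightarrow$(1).} Fix an injection $\sigma\colon\Phi_{\mathcal E}(f)\to\{\text{non-coordinate points}\}$. For each point $Q\in\Phi_{\mathcal E}(f)$, choose a representative $\phi_Q\in\F_q^\ell$ and a spanning vector $h_Q\in\F_q^r$ of $\sigma(Q)$, so that $\wt(h_Q)\ge2$. For $i\in\operatorname{supp}(f)$, write $f_i=\lambda_i\phi_{\langle f_i\rangle}$ with $\lambda_i\ne0$, and set $p_i:=\lambda_i h_{\langle f_i\rangle}$. For $i\notin\operatorname{supp}(f)$, set $p_i:=0$. Then $p_i=T(f_i)$ for a linear map $T\colon\F_q^\ell\to\F_q^r$ that is defined on each point $\phi_Q$ by $\phi_Q\mapsto h_Q$. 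This is only a bookkeeping device, since $T$ need not extend linearly, and the argument should not depend on it.

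Now let $w\in V$ with $f(w)\ne0$ and write $x=[w]_{\mathcal E}$; we need $\wt(x)+\wt(xP)\ge3$. The support of $x$ must meet $\operatorname{supp}(f)$.
\begin{itemize}
\item If $\wt(x)\ge3$, there is nothing to prove.
\item If $\wt(x)=1$, then $x=ce_i$ with $i\in\operatorname{supp}(f)$, and $\wt(xP)=\wt(p_i)\ge2$.
\item If $\wt(x)=2$, it suffices to show $xP\ne0$. Write $x=\alpha e_i+\beta e_j$. If only $i$ lies in $\operatorname{supp}(f)$, then $xP=\alpha p_i\ne0$. If both lie in the support and $\langle f_i\rangle\ne\langle f_j\rangle$, then $p_i,p_j$ span distinct points, because $\sigma$ is injective, so they are independent and $\alpha p_i+\beta p_j\ne0$. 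If both lie in the support and $\langle f_i\rangle=\langle f_j\rangle=Q$, then $f(w)=(\alpha\lambda_i+\beta\lambda_j)\phi_Q\ne0$ forces $\alpha\lambda_i+\beta\lambda_j\ne0$, and therefore $xP=(\alpha\lambda_i+\beta\lambda_j)h_Q\ne0$.
\end{itemize}
This scaling compatibility for indices with the same projective point is the main subtle step, and the choice $p_i=\lambda_i h_Q$ is made exactly to handle it. Hence $\delta_{\Enc}(f)\ge3$.

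The formula for $r_{f,\mathcal E}^{\mathrm{sys,lin}}(3)$ follows from the equivalence once we note that $r=0$ is impossible, since $\Enc=\mathrm{id}$ gives $\delta_{\Enc}(f)=1$ because $f(e_i)\ne0$ for some $i$. Finiteness of the minimum follows because $\Lambda_q(r)\to\infty$ as $r\to\infty$.
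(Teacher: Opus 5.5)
Your proof is correct and follows the paper's argument essentially step for step. For (1)$\Rightarrow$(2) you build the same injection of $\Phi_{\mathcal E}(f)$ into the non-coordinate points via Lemma~\ref{lem:distance-three-rows}, and for (2)$\Rightarrow$(1) you use the same construction $p_i=\lambda_i h_{\langle f_i\rangle}$ (the paper takes $f_{i_L}$ where you take $\phi_Q$) with the same weight-one and weight-two case check. Your explicit justification that $r=0$ is impossible is a small point the paper leaves implicit, and the aside calling $T$ a ``linear map'' that ``need not extend linearly'' is self-contradictory and should simply be dropped.
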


\begin{proof}
Let $\Enc$ be as in~(1), with
$G_{\mathcal E}(\Enc)=[\,I_k\mid P\,]$, and let
$p_1,\dots,p_k$ denote the rows of $P$. For each
$L\in\Phi_{\mathcal E}(f)$, fix an index
$i_L\in\operatorname{supp}(f)$ with $\langle f_{i_L}\rangle=L$. 
Define
\[
\psi\colon\Phi_{\mathcal E}(f)\longrightarrow\mathrm{PG}(r-1,q),
\qquad
\psi(L):=\langle p_{i_L}\rangle.
\]
By Lemma~\ref{lem:distance-three-rows}, $\psi$ is an injective map into the set of non-coordinate points of $\mathrm{PG}(r-1,q)$. Hence
$\nu_{\mathcal E}(f)\leq\Lambda_q(r)$.

Conversely, assume that 
$\nu_{\mathcal E}(f)\leq\Lambda_q(r)$, and choose an injection
\[
\gamma\colon\Phi_{\mathcal E}(f)\longrightarrow
\{\text{non-coordinate points of }\mathrm{PG}(r-1,q)\}.
\]

For each point $L\in\Phi_{\mathcal E}(f)$, we fix a vector $h_L\in\F_q^r$
spanning $\gamma(L)$ and an index $i_L\in\operatorname{supp}(f)$
with $\langle f_{i_L}\rangle=L$.

Let $P\in\F_q^{k\times r}$ be the matrix with rows
$p_i, 1 \le i\le k$, defined as follows.
For $i\notin\operatorname{supp}(f)$, set
$p_i:=0$. 
For each $i\in\operatorname{supp}(f)$, let $L:=\langle f_i\rangle$,
and   write $f_i=\lambda_i f_{i_L}$ with $\lambda_i\in\F_q^\times$.
Set $p_i:=\lambda_ih_L$. 

Let $\Enc$ be the linear encoding with
$G_{\mathcal E}(\Enc)=[\,I_k\mid P\,]$.
This encoding is systematic relative to $\mathcal E$ and has redundancy $r$.
We now verify $\delta_{\Enc}(f)\geq3$ by checking
$\wt([w]_{\mathcal E})+\wt([w]_{\mathcal E}P)\geq3$ for all $w$
with $f(w)\neq0$. It is sufficient to check only those $w$ with $\wt([w]_{\mathcal E}) \le 2$. 

\begin{enumerate}[wide=0pt, leftmargin=*, label=Case~\arabic*:]

\item $w=ce_i$ with $c\neq0$. Here $f(w)=cf_i\neq0$ implies
$i\in\operatorname{supp}(f)$; put $L:=\langle f_i\rangle$. Then
$[w]_{\mathcal E}P=c\lambda_ih_L$ has weight $\wt(h_L)\geq2$, since
$\gamma(L)$ is non-coordinate. So
$\wt(\Enc(w))=\wt([w]_{\mathcal E})+\wt([w]_{\mathcal E}P)\geq3$.

\item  $w=\alpha e_i+\beta e_j$ with $\alpha,\beta\neq0$ and
$f(w)\neq0$, say $i\in\operatorname{supp}(f)$, with
$L=\langle f_i\rangle$. 

If $j\notin\operatorname{supp}(f)$, then
$[w]_{\mathcal E}P=\alpha\lambda_ih_L\neq0$.

If
$j\in\operatorname{supp}(f)$ with $\langle f_j\rangle=L$, then
$f(w)=(\alpha\lambda_i+\beta\lambda_j)f_{i_L}\neq0$, which implies
$\alpha\lambda_i+\beta\lambda_j\neq0$, and
$[w]_{\mathcal E}P=(\alpha\lambda_i+\beta\lambda_j)h_L\neq0$. 

If
$j\in\operatorname{supp}(f)$ with
$\langle f_j\rangle=L'\neq L$, then $h_L$ and $h_{L'}$ span the
distinct points $\gamma(L)\neq\gamma(L')$, so
$[w]_{\mathcal E}P=\alpha\lambda_ih_L+\beta\lambda_j h_{L'}\neq0$.
\end{enumerate} 
In each subcase of Case 2, we have $\wt([w]_{\mathcal E}P) \ge 1$, and so $\wt(\Enc(w)) \ge 3$. Hence $\delta_{\Enc}(f)\geq3$.
\end{proof}

\begin{remark}
Corollary~\ref{cor:kernel-invariance} shows that systematic linear
redundancy is unchanged when $f$ is replaced by a linear function
with the same kernel.
Theorem~\ref{thm:monomial-invariance} shows that it is unchanged
under monomial equivalence of functions. 
At separation $d=3$,
Theorem~\ref{thm:systematic-distance-three} gives a stronger description with the invariant $\nu_{\mathcal E}(f)$, see Remark \ref{rem:projective-coordinate-invariance}.
\end{remark}

\begin{remark}
\label{rem:clark-small-distance} 
In the UEP literature, 
Clark, Dunning, and Rogers~\cite{clark1990} studied a small-distance problem different from ours. 
As an application of their results on binary parity-check matrices,
they determined, for each fixed redundancy $r$, exact block-length
thresholds for the existence of binary linear codes of minimum
distance three or four with a nonconstant input separation vector.
In contrast, we fix a linear function $f$ and a prescribed
separation $d$, and ask for the optimal redundancy.
\end{remark}

 From Proposition~\ref{prop:free-distance-three} and Theorem~\ref{thm:systematic-distance-three}, we have the following characterization when the systematicity gap is zero at separation $d=3$.
\begin{corollary}
\label{cor:systematic-equality-distance-three}
Let $f\colon V\to\F_q^\ell$ be a surjective linear function. Let 
\[
r_0
:=
r_f^{\mathrm{free,lin}}(3)
=
\Nlin(\ell,3)-\ell
=
\min\{r\geq1:\ell\leq\Lambda_q(r)\}.
\]
Then
\[ 
r_{f,\mathcal E}^{\mathrm{sys,lin}}(3)
=
r_f^{\mathrm{free,lin}}(3)
\quad\Longleftrightarrow\quad
\nu_{\mathcal E}(f)\leq\Lambda_q(r_0).
\]
\end{corollary}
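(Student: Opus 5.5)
The plan is to combine Proposition~\ref{prop:free-distance-three}, Theorem~\ref{thm:free-exact} and Theorem~\ref{thm:systematic-distance-three}, using that $\Lambda_q(r)$ is nondecreasing in $r$ for $r\geq1$. Proposition~\ref{prop:free-distance-three} gives $r_0=\min\{r\geq1:\ell\leq\Lambda_q(r)\}$. Theorem~\ref{thm:free-exact} gives $r_0=\Nlin(\ell,3)-\ell$, which settles the three expressions defining $r_0$.

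First I will check monotonicity. We have
\[
\Lambda_q(r+1)-\Lambda_q(r)=q^r-1\geq0,
\]
so the set $\{r\geq1:\nu\leq\Lambda_q(r)\}$ is upward closed for every integer $\nu$. Hence, for any $r\geq1$, the minimum $\min\{r'\geq1:\nu\leq\Lambda_q(r')\}$ is at most $r$ if and only if $\nu\leq\Lambda_q(r)$.

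Next I will record the standing inequality $r_0\leq r_f^{\mathrm{sys,lin}}(3)$ from Remark~\ref{rem:optimal-redundancy-relation} (or Corollary~\ref{cor:systematic-lower}). Equality therefore holds if and only if $r_f^{\mathrm{sys,lin}}(3)\leq r_0$. By Theorem~\ref{thm:systematic-distance-three}, $r_f^{\mathrm{sys,lin}}(3)=\min\{r\geq1:\nu_{\mathcal E}(f)\leq\Lambda_q(r)\}$. Since $r_0\geq1$, the monotonicity step shows this minimum is at most $r_0$ exactly when $\nu_{\mathcal E}(f)\leq\Lambda_q(r_0)$. Alternatively, one can apply the equivalence (1)$\Leftrightarrow$(2) of Theorem~\ref{thm:systematic-distance-three} directly at $r=r_0$: a systematic encoding with redundancy $r_0$ exists if and only if $\nu_{\mathcal E}(f)\leq\Lambda_q(r_0)$.

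The only point needing care is the boundary case $r=0$, which is excluded from both minima. Here I note that $r_0\geq1$: indeed $\Lambda_q(0)=0<\ell$, or equivalently the Singleton bound forces redundancy at least $2$. So applying Theorem~\ref{thm:systematic-distance-three} at $r_0$ is legitimate. I expect no real obstacle; the argument is bookkeeping once monotonicity of $\Lambda_q$ is noted.
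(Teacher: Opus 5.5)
Your proposal is correct and is essentially the paper's argument. The paper obtains this corollary directly from Proposition~\ref{prop:free-distance-three} and Theorem~\ref{thm:systematic-distance-three}, and you make explicit the bookkeeping it leaves implicit: $r_f^{\mathrm{free,lin}}(3)\leq r_f^{\mathrm{sys,lin}}(3)$, combined with monotonicity of $\Lambda_q$ or with the equivalence of Theorem~\ref{thm:systematic-distance-three} applied at $r=r_0$ (your check that $r_0\geq1$ is harmless but automatic from its definition as a minimum over $r\geq1$).
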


We illustrate with two examples.

\begin{example}
\label{ex:noncoordinate-equality}
Let $V=\F_2^4$ with its standard basis as $\mathcal E$, and define
\[
f\colon V\longrightarrow\F_2^3,
\qquad
f(x_1,x_2,x_3,x_4)
=
(x_1+x_4,\,x_2+x_4,\,x_3+x_4).
\]
The vectors $f_i=f(e_i)$ are
\[
f_1=(1,0,0),\quad f_2=(0,1,0),\quad f_3=(0,0,1),\quad f_4=(1,1,1),
\]
so $\operatorname{supp}(f)=\{1,2,3,4\}$.
The set
$\Phi_{\mathcal E}(f)$ consists of the four points
$\langle f_1\rangle,\dots,\langle f_4\rangle$ and
$\nu_{\mathcal E}(f)=4$. Since $\Lambda_2(2)=1$ and
$\Lambda_2(3)=4$, Theorem~\ref{thm:systematic-distance-three} gives
$r_f^{\mathrm{sys,lin}}(3)=3$, and
Proposition~\ref{prop:free-distance-three} gives
$r_f^{\mathrm{free,lin}}(3)=3$.

We also describe the  encoding $\Enc$ of
Theorem~\ref{thm:systematic-distance-three} in this case. 
The non-coordinate points of $\mathrm{PG}(2,2)$ are
\[
\langle(1,1,0)\rangle,\quad
\langle(1,0,1)\rangle,\quad
\langle(0,1,1)\rangle,\quad
\langle(1,1,1)\rangle.
\]
Since
$\nu_{\mathcal E}(f)=4$, the injection $\gamma$ is
forced to be a bijection. We define $\gamma$ as
\[
\gamma(\langle f_1\rangle)=\langle(1,1,0)\rangle,\quad
\gamma(\langle f_2\rangle)=\langle(1,0,1)\rangle,\quad
\gamma(\langle f_3\rangle)=\langle(0,1,1)\rangle,\quad
\gamma(\langle f_4\rangle)=\langle(1,1,1)\rangle.
\]
Over $\F_2$, we have $\lambda_i=1$ for all $i$, and  each point $\gamma(L)$ has a unique representative $h_L$. 
The $i$th row of $P$ is then
$p_i=h_{\langle f_i\rangle}$, so that
\[
P=
\begin{pmatrix}
1&1&0\\
1&0&1\\
0&1&1\\
1&1&1
\end{pmatrix},
\]
and the encoding with $G_{\mathcal E}(\Enc)=[\,I_4\mid P\,]$ has
$\delta_{\Enc}(f)\geq3$.

In this example,  $r_f^{\mathrm{sys,lin}}(3)=r_f^{\mathrm{free,lin}}(3)=3$ and the systematicity gap is zero. 
On the other hand, we see that $\ker f=\operatorname{span}\{(1,1,1,1)\}$ is not spanned by basis vectors of $\mathcal{E}$. In particular, the sufficient condition of
Theorem~\ref{thm:coordinate-kernel-exact} is not necessary for
$r_f^{\mathrm{sys,lin}}(3)=r_f^{\mathrm{free,lin}}(3)$.
\end{example} 

\begin{example}
\label{ex:strict-systematicity-gap}
Let $V=\F_3^4$ with its standard basis as $\mathcal E$, and define
\[
f\colon V\longrightarrow\F_3^2,
\qquad
f(x_1,x_2,x_3,x_4)
=
(x_1+x_3+2x_4,\,x_2+x_3+2x_4).
\]
The vectors $f_i=f(e_i)$ are
\[
f_1=(1,0),\qquad
f_2=(0,1),\qquad
f_3=(1,1),\qquad
f_4=(2,2)=2f_3,
\]
so $\operatorname{supp}(f)=\{1,2,3,4\}$. Then
\[
\Phi_{\mathcal E}(f)
=
\bigl\{
\langle(1,0)\rangle,\,
\langle(0,1)\rangle,\,
\langle(1,1)\rangle
\bigr\},
\qquad
\nu_{\mathcal E}(f)=3.
\]
Since $ \Lambda_3(2)=2,   \Lambda_3(3)=10, $
Proposition~\ref{prop:free-distance-three} and
Theorem~\ref{thm:systematic-distance-three} give
\[
r_f^{\mathrm{free,lin}}(3)=2
<
3=r_f^{\mathrm{sys,lin}}(3).
\]
There are 10 non-coordinate points in $\mathrm{PG}(2,3)$. We define $\gamma$ as
\[
\begin{aligned}
\gamma(\langle(1,0)\rangle)&=\langle(1,1,0)\rangle,\\
\gamma(\langle(0,1)\rangle)&=\langle(1,0,1)\rangle,\\
\gamma(\langle(1,1)\rangle)&=\langle(0,1,1)\rangle.
\end{aligned}
\]
For the points $\langle(1,0)\rangle$, $\langle(0,1)\rangle$, and
$\langle(1,1)\rangle$ of $\Phi_{\mathcal E}(f)$, we fix the
representatives $h_L=(1,1,0)$, $(1,0,1)$, and $(0,1,1)$ of their
images under $\gamma$, and the indices $i_L=1$, $2$, and $3$,
respectively. The scalars of Theorem~\ref{thm:systematic-distance-three}
are then $\lambda_1=\lambda_2=\lambda_3=1$ and $\lambda_4=2$, since
$f_4=2f_3$.
The construction in
Theorem~\ref{thm:systematic-distance-three} gives
\[
P=
\begin{pmatrix}
1&1&0\\
1&0&1\\
0&1&1\\
0&2&2
\end{pmatrix}.
\]
Then the
encoding $\Enc$ with $G_{\mathcal E}(\Enc)=[\,I_4\mid P\,]$ has $\delta_{\Enc}(f)\geq3$. 
\end{example}

Examples~\ref{ex:noncoordinate-equality}
and~\ref{ex:strict-systematicity-gap} exhibit systematicity gaps
$r_f^{\mathrm{sys,lin}}(3)-r_f^{\mathrm{free,lin}}(3)$ of zero and one. 
For fixed $q$, this gap can be arbitrarily large.
\begin{corollary} 
\label{cor:unbounded-systematicity-gap}
Let $q$ be a prime power and $\ell\geq1$, and let $V$ be an $\F_q$-vector space of dimension $k=\frac{q^\ell-1}{q-1}$ with a
distinguished basis $\mathcal E$. Then there exists a surjective
linear function $f\colon V\to\F_q^\ell$ with systematicity gap
\[
r_f^{\mathrm{sys,lin}}(3)-r_f^{\mathrm{free,lin}}(3)
\geq
\ell-1-\lceil\log_q\ell\rceil.
\]
In particular, for every $N>0$, there exist $\ell \ge1$  and $V, \mathcal{E},f$ as above such that
\[
r_f^{\mathrm{sys,lin}}(3)-r_f^{\mathrm{free,lin}}(3)>N.
\]
\end{corollary}

\begin{proof}
Choose nonzero vectors $f_1,\dots,f_k\in\F_q^\ell$, one
representative from each point of $\mathrm{PG}(\ell-1,q)$, indexed
so that $f_1,\dots,f_\ell$ are the standard basis vectors of
$\F_q^\ell$. Let $f\colon V\to\F_q^\ell$ be the unique surjective linear
function satisfying
\[
f(e_i)=f_i,\qquad 1\leq i\leq k.
\]
Moreover, since we choose $f_i$ nonzeros, 
we have 
$\operatorname{supp}(f)=\{1,\dots,k\}.$
Also, $\Phi_{\mathcal E}(f)=\mathrm{PG}(\ell-1,q),$
and so $\nu_{\mathcal E}(f)=k.$
For $r\leq\ell$, we have
\[
\Lambda_q(r)<\frac{q^r-1}{q-1}\leq k,
\]
while $\Lambda_q(\ell+1)=q^\ell+k-(\ell+1)\geq k.$
By Theorem~\ref{thm:systematic-distance-three},  $r_f^{\mathrm{sys,lin}}(3)=\ell+1$.

Let $m:=\lceil\log_q\ell\rceil$, so that $q^m\geq\ell$. Since
\[
\Lambda_q(m+2) =
q^{m+1}+q^m+\cdots+q+1-(m+2)
\geq q^{m+1}-1 \geq q\ell-1 \geq \ell,
\]
Proposition~\ref{prop:free-distance-three} gives
$r_f^{\mathrm{free,lin}}(3)\leq m+2$.
Hence,
\[
r_f^{\mathrm{sys,lin}}(3)-r_f^{\mathrm{free,lin}}(3)
\geq
\ell+1-(m+2)= \ell-1-\lceil\log_q\ell\rceil,
\]
which is the stated bound. The last statement follows from classical analysis.
\end{proof}
\begin{remark}
\label{rem:distance-four}
In Theorem~\ref{thm:free-exact}, we have determined the free linear
redundancy for every prescribed separation. The  systematic linear separation problem at $d=4$ is more involved. Indeed, if
$G_{\mathcal E}(\Enc)=[\,I_k\mid P\,]$, then every $w\in V$ with
$f(w)\neq0$ and
$\wt([w]_{\mathcal E})=s\leq3$ must satisfy
\[
\wt\bigl([w]_{\mathcal E}P\bigr)\geq4-s.
\]
Thus one must control the Hamming weights of all relevant linear
combinations of up to three rows of $P$, rather than only the
projective conditions used at separation $d=3$. We leave an explicit
determination of
$r_{f,\mathcal E}^{\mathrm{sys,lin}}(4)$ for future work.
\end{remark} 

\section{Conclusion}\label{sec:conclusion}

In this paper, we developed a function-separation framework for FCCs and UEP codes. Our main contributions are the following.

\begin{enumerate}

\item The relation between FCCs and UEP is known from
\cite{lenz2023}. We made it precise at the level of parameters: the
output-coordinate, input-coordinate, and component UEP parameters are
function-separation distances of the corresponding coordinate and
projection maps.
For linear encodings and
linear functions, we identified $\delta_{\Enc}(f)$ with the first relative
generalized Hamming weight of the pair
$\Enc(\ker f)\subsetneq C$, and obtained its separation-vector
representation with respect to a suitable basis
(Theorems~\ref{thm:kernel-formula} and~\ref{thm:adapted-basis}).

\item We formulated the free and systematic linear separation problems and introduced the optimal
free and systematic linear redundancies
$r_f^{\mathrm{free,lin}}(d)$ and
$r_f^{\mathrm{sys,lin}}(d)$. We proved that the optimal free linear
redundancy is the same for all linear functions of the same rank
(Theorem~\ref{thm:free-exact}), and that monomially equivalent functions
have the same optimal systematic linear redundancy
(Theorem~\ref{thm:monomial-invariance}).

\item We determined the optimal systematic linear redundancy for all
prescribed separations $d\leq3$. The cases $d=1$ and $d=2$ are
determined in Proposition~\ref{prop:small-separations}, while the
case $d=3$ is determined in
Theorem~\ref{thm:systematic-distance-three}. Combining these results
with the free linear redundancy formula of
Theorem~\ref{thm:free-exact}, we characterized exactly when the
systematicity gap at separation $d=3$ is zero and showed that this
gap can be arbitrarily large
(Corollaries~\ref{cor:systematic-equality-distance-three}
and~\ref{cor:unbounded-systematicity-gap}).

\end{enumerate}

These results show that systematicity is a genuine
structural constraint in linear function-correcting codes. 
The systematicity gap depends on the position of $\ker f$ relative to the distinguished basis  and can already be arbitrarily large at separation $d=3$. 
Determining the systematicity gap at separation $d \ge 4$ is a natural next step in the development of the function-separation framework.

\end{document}